\documentclass[english]{article}
\usepackage[latin9]{inputenc}
\usepackage{geometry}
\geometry{verbose,lmargin=3cm,rmargin=3cm}
\usepackage{amsmath}
\usepackage{amsthm}

\makeatletter
\numberwithin{equation}{section}
\numberwithin{figure}{section}
\theoremstyle{plain}
\newtheorem{thm}{\protect\theoremname}
  \theoremstyle{definition}
  \newtheorem{defn}[thm]{\protect\definitionname}
  \theoremstyle{plain}
  \newtheorem{lem}[thm]{\protect\lemmaname}

\def\final{0}  

\ifnum\final=0  
\newcommand{\note}[1]{[{\tiny \bf #1}]\marginpar{*}}
\newcommand{\sidecomment}[1]{\marginpar{\tiny #1}}
\else 
\newcommand{\note}[1]{}
\newcommand{\sidecomment}[1]{}
\fi  

\usepackage[lined,boxed,ruled,norelsize,algo2e]{algorithm2e}

\usepackage{amsfonts}
\usepackage{tikz}
\usetikzlibrary{shapes,arrows,arrows.meta}

\makeatletter
\newcommand{\xleftrightarrow}[2][]{\ext@arrow 3359\leftrightarrowfill@{#1}{#2}}
\newcommand{\xdashrightarrow}[2][]{\ext@arrow 0359\rightarrowfill@@{#1}{#2}}
\newcommand{\xdashleftarrow}[2][]{\ext@arrow 3095\leftarrowfill@@{#1}{#2}}
\newcommand{\xdashleftrightarrow}[2][]{\ext@arrow 3359\leftrightarrowfill@@{#1}{#2}}
\def\rightarrowfill@@{\arrowfill@@\relax\relbar\rightarrow}
\def\leftarrowfill@@{\arrowfill@@\leftarrow\relbar\relax}
\def\leftrightarrowfill@@{\arrowfill@@\leftarrow\relbar\rightarrow}
\def\arrowfill@@#1#2#3#4{%
  $\m@th\thickmuskip0mu\medmuskip\thickmuskip\thinmuskip\thickmuskip
   \relax#4#1
   \xleaders\hbox{$#4#2$}\hfill
   #3$%
}

\pgfarrowsdeclare{tonew}{tonew}
{
  \ifdim\pgfgetarrowoptions{tonew}=-1pt%
    \pgfutil@tempdima=0.84pt%
    \advance\pgfutil@tempdima by1.3\pgflinewidth%
    \pgfutil@tempdimb=0.21pt%
    \advance\pgfutil@tempdimb by.625\pgflinewidth%
  \else%
    \pgfutil@tempdima=\pgfgetarrowoptions{tonew}%
    \pgfarrowsleftextend{+-0.8\pgfutil@tempdima}%
    \pgfarrowsrightextend{+0.2\pgfutil@tempdima}%
  \fi%
}
{
  \ifdim\pgfgetarrowoptions{tonew}=-1pt%
    \pgfutil@tempdima=0.28pt%
    \advance\pgfutil@tempdima by.3\pgflinewidth%
    \pgfutil@tempdimb=0pt,%
  \else%
    \pgfutil@tempdima=\pgfgetarrowoptions{tonew}%
    \multiply\pgfutil@tempdima by 100%
    \divide\pgfutil@tempdima by 375%
    \pgfutil@tempdimb=0.4\pgflinewidth%
  \fi%
  \pgfsetdash{}{+0pt}
  \pgfsetroundcap
  \pgfsetroundjoin
  \pgfpathmoveto{\pgfpointorigin}
  \pgflineto{\pgfpointadd{\pgfpoint{0.75\pgfutil@tempdima}{0bp}}
                         {\pgfqpoint{-2\pgfutil@tempdimb}{0bp}}}
  \pgfusepathqstroke
  \pgfsetlinewidth{0.8\pgflinewidth}
  \pgfpathmoveto{\pgfpointadd{\pgfqpoint{-3\pgfutil@tempdima}{4\pgfutil@tempdima}}
                             {\pgfqpoint{\pgfutil@tempdimb}{0bp}}}
  \pgfpathcurveto
  {\pgfpointadd{\pgfqpoint{-2.75\pgfutil@tempdima}{2.5\pgfutil@tempdima}}
               {\pgfqpoint{0.5\pgfutil@tempdimb}{0bp}}}
  {\pgfpointadd{\pgfqpoint{0pt}{0.25\pgfutil@tempdima}}
               {\pgfqpoint{-0.5\pgfutil@tempdimb}{0bp}}}
  {\pgfpointadd{\pgfqpoint{0.75\pgfutil@tempdima}{0pt}}
               {\pgfqpoint{-\pgfutil@tempdimb}{0bp}}}
  \pgfpathcurveto
  {\pgfpointadd{\pgfqpoint{0pt}{-0.25\pgfutil@tempdima}}
               {\pgfqpoint{-0.5\pgfutil@tempdimb}{0bp}}}
  {\pgfpointadd{\pgfqpoint{-2.75\pgfutil@tempdima}{-2.5\pgfutil@tempdima}}
               {\pgfqpoint{0.5\pgfutil@tempdimb}{0bp}}}
  {\pgfpointadd{\pgfqpoint{-3\pgfutil@tempdima}{-4\pgfutil@tempdima}}
               {\pgfqpoint{\pgfutil@tempdimb}{0bp}}}
  \pgfusepathqstroke
}

\pgfarrowsdeclarealias{<new}{>new}{tonew}{tonew}
\pgfsetarrowoptions{tonew}{-1pt}
\pgfkeys{/tikz/.cd, arrowhead/.default=-1pt, arrowhead/.code={
  \pgfsetarrowoptions{tonew}{#1},
}}
\makeatother

\makeatother

\usepackage{babel}
  \providecommand{\definitionname}{Definition}
  \providecommand{\lemmaname}{Lemma}
\providecommand{\theoremname}{Theorem}

\begin{document}
\global\long\def\defeq{\stackrel{\mathrm{{\scriptscriptstyle def}}}{=}}
\global\long\def\norm#1{\left\Vert #1\right\Vert }
\global\long\def\R{\mathbb{R}}
 \global\long\def\Rn{\mathbb{R}^{n}}
\global\long\def\tr{\mathrm{Tr}}
\global\long\def\diag{\mathrm{diag}}
\global\long\def\vol{\mathrm{Vol}}
\global\long\def\E{\mathbb{E}}
\global\long\def\time{\mathcal{T}}
\global\long\def\SO{\text{SO}}
\global\long\def\linspan{\mbox{span}}
\global\long\def\eps{\varepsilon}
\global\long\def\ortho{\perp}
\global\long\def\giventhat{\mid}
\global\long\def\RR{\mathbb{R}}
\global\long\def\card#1{\lvert#1\rvert}
\global\long\def\oproj{\mbox{proj}}
\global\long\def\proj{\pi}
\global\long\def\poly{\mbox{poly}}
\global\long\def\suchthat{\mathrel{:}}
\global\long\def\abs#1{\left|#1\right|}
\global\long\def\ZZ{\mathbb{Z}}
\global\long\def\width{\mbox{width}}
\global\long\def\otilde{\widetilde{O}}
\global\long\def\indicfunc#1{1_{#1}}

\title{Efficient Convex Optimization with Membership Oracles}

\author{Yin Tat Lee\thanks{Microsoft Research, yile@microsoft.com} \and
Aaron Sidford\thanks{Stanford University, sidford@stanford.edu} \and
Santosh S. Vempala \thanks{Georgia Tech, vempala@gatech.edu}}
\maketitle
\begin{abstract}
We consider the problem of minimizing a convex function over a convex
set given access only to an evaluation oracle for the function and
a membership oracle for the set. We give a simple algorithm which
solves this problem with $\otilde(n^{2})$ oracle calls and $\otilde(n^{3})$
additional arithmetic operations. Using this result, we obtain more
efficient reductions among the five basic oracles for convex sets
and functions defined by Grötschel, Lovasz and Schrijver \cite{GLS}.
\end{abstract}
\thispagestyle{empty}\newpage
\clearpage \setcounter{page}{1}

\section{Introduction}

Minimizing a convex function over a convex set is a fundamental problem
with many applications. The problem stands at the forefront of polynomial-time
tractability and its study has lead to the development of numerous
general algorithmic techniques. In recent years, improvements to important
special cases (e.g., maxflow) have been closely related to ideas and
improvements for the general problem \cite{christiano2011electrical,sherman2013nearly,madry2013navigating,kelner2014almost,lee2013new,lee2014path,lee2015efficient,lee2015faster,DBLP:conf/stoc/Sherman17}. 

Here we consider the very general setting where the objective function
and feasible region are both presented only as oracles that can be
queried, specifically an \emph{evaluation oracle} for the function
and a \emph{membership oracle} for the set. We study the problem of
minimizing a convex function over a convex set provided only these
oracles as well as bounds $0<r<R$ and a point $x_{0}\in K$ s.t.
$B(x_{0},r)\subseteq K\subseteq B(x_{0},R)$ where $B(x_{0},r)$ is
the ball of radius $r$ centered at $x_{0}\in\R^{n}$. 

It is well-known that with a stronger \emph{separation} oracle for
the set (and subgradient oracle for the function), this problem can
be solved with $\tilde{O}(n)$ oracle queries using any of \cite{Va96,BV04,lee2015faster}
or with $\tilde{O}(n^{2})$ queries by the classic ellipsoid algorithm
\cite{GLS}. Moreover, it is known that the problem can be solved
with only evaluation and membership oracles through reductions shown
by Grötschel, Lovasz and Schrijver in their classic book \cite{GLS}.
However, the reduction in \cite{GLS} appears to take at least $n^{10}$
calls to the membership oracle. This has been improved using the random
walk method and simulated annealing to $n^{4.5}$ \cite{KV06,LV06}
and  \cite{AbernethyH16}  provides further improvements of up to
a factor of $\sqrt{n}$ for more structured convex sets. 

Our main result in this paper is an algorithm that minimizes a convex
function over a convex set using only $\tilde{O}(n^{2})$ membership
and evaluation queries. Interestingly, we obtain this result by first
showing that we can implement a separation oracle for a convex set
and a subgradient oracle for a function using only $\tilde{O}(n)$
membership queries (Section~\ref{sec:From-Membership-to}) and then
using the known reduction from optimization to separation (Section~\ref{sec:From-Separation-to}).
We state the result informally below. The formal statements, which
allow an approximate membership oracle, are Theorem~\ref{thm:separate_set}
and Theorem~\ref{thm:conv_opt}. 
\begin{thm}
Let $K$ be a convex set specified by a membership oracle, a point
$x_{0}\in\R^{n}$, and numbers $0<r<R$ such that $B(x_{0},r)\subseteq K\subseteq B(x_{0},R)$.
For any convex function $f$ given by an evaluation oracle and any
$\epsilon>0$, there is a randomized algorithm that computes a point
$z\in B(K,\epsilon)$ such that. 
\[
f(z)\le\min_{x\in K}f(x)+\epsilon\left(\max_{x\in K}f(x)-\min_{x\in K}f(x)\right)
\]
with constant probability using $O\left(n^{2}\log^{O(1)}\left(\frac{nR}{\epsilon r}\right)\right)$
calls to the membership oracle and evaluation oracle and $O(n^{3}\log^{O(1)}\left(\frac{nR}{\epsilon r}\right))$
total arithmetic operations. 
\end{thm}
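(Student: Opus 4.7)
The plan is to prove the theorem by a two-stage reduction: first, convert the given membership oracle for $K$ and evaluation oracle for $f$ into an approximate separation oracle for $K$ and a subgradient oracle for $f$, and then feed these constructed oracles into a known separation-based cutting-plane method for convex optimization. This is the route flagged in the introduction: the two stages are exactly the content of Section~\ref{sec:From-Membership-to} and Section~\ref{sec:From-Separation-to}, and the final bounds come from simply multiplying their per-call costs.

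In the first stage I would apply Theorem~\ref{thm:separate_set} from Section~\ref{sec:From-Membership-to}, which simulates one call to an approximate separation oracle for $K$ using $\otilde(n)$ membership queries and $\otilde(n^{2})$ additional arithmetic operations; an essentially identical construction, applied to sublevel sets of $f$ via the evaluation oracle, yields a subgradient oracle at the same per-call cost. The intuition behind this simulation is that to separate a candidate point $y$ from $K$, one can binary-search along the segment from the known interior point $x_{0}$ to $y$ to locate a point near $\partial K$, and then probe the membership oracle along $O(n)$ carefully chosen directions near this boundary point to reconstruct an approximate outer normal; an analogous line search on level sets of $f$ produces an approximate gradient at $y$.

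Second, I would plug these oracles into the separation-to-optimization reduction packaged in Section~\ref{sec:From-Separation-to}, a variant of the classical cutting-plane reduction \cite{Va96,lee2015faster,GLS} hardened against randomized, approximate oracles. That reduction produces an $\epsilon$-approximate minimizer using $\otilde(n)$ calls to the separation/subgradient oracle and $\otilde(n^{2})$ arithmetic per iteration. Composing the two stages and multiplying the per-call costs yields $\otilde(n^{2})$ membership and evaluation queries together with $\otilde(n^{3})$ total arithmetic, matching the claimed bounds. Constant overall success probability is obtained by amplifying each simulated oracle call to failure probability $1/\poly(n)$ via $O(\log n)$ independent repetitions, which is absorbed into the polylogarithmic factors.

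\textbf{Main obstacle.} The crux of the argument is the compatibility of the two stages: the outer cutting-plane method issues $\otilde(n)$ queries, each answered by a randomized and only approximately correct oracle, and one must verify that this level of imprecision is tolerated by the cutting-plane analysis. Concretely, the separation oracle built in the first stage may return a hyperplane whose normal is only $\epsilon'$-close to a true outer normal and may misclassify points within distance $\epsilon'$ of $\partial K$. The proof must therefore set this inner accuracy $\epsilon'$ polynomially small in $\epsilon$, $r$, $R$, and $n$ so that the cutting-plane method still outputs an $\epsilon$-approximate minimizer, while keeping the resulting $\log(1/\epsilon')$ factor in the oracle complexity polylogarithmic. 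Pinning these parameters down cleanly, and dealing with the fact that the returned point lies in the slightly enlarged set $B(K,\epsilon)$ rather than in $K$ itself, is where the bulk of the bookkeeping in the formal proof lives.
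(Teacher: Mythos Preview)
Your proposal is correct and follows the paper's route: build a separation oracle for $K$ (and a subgradient oracle for $f$) from $\otilde(n)$ membership/evaluation queries via Theorem~\ref{thm:separate_set}, then feed these into the $\otilde(n)$-query cutting-plane reduction of Theorem~\ref{thm:conv_opt}, multiplying the costs to get $\otilde(n^{2})$ oracle calls and $\otilde(n^{3})$ arithmetic. The only cosmetic difference is that the paper packages the constrained problem $\min_{x\in K} f(x)$ as linear optimization over the epigraph set $K_{f}$ (Lemma~\ref{lem:from_f_to_Kf}) rather than running a cutting-plane method with two separate oracles, and your informal picture of the membership-to-separation step (probing $O(n)$ directions near a boundary point) is a simplification of what Theorem~\ref{thm:separate_set} actually does---namely, reducing to a subgradient of the Lipschitz height function $h_{x}$ and estimating it by randomized finite differences using Lemma~\ref{lem:convex_almost_flat}.
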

Protasov \cite{Protasov1996} gives an algorithm for approximately
minimizing a convex function defined over an explicit convex body
in $\R^{n}$, using $O(n^{2}\log(n)\log(1/\eps))$ function evaluations,
a logarithmic factor higher. Unfortunately, each iteration of his
algorithm requires computing the convex hull, John ellipsoid and centroid
of a set maintained by the algorithm, thereby making a very large
number of calls to the membership oracle (in \cite{Protasov1996}
the focus is on the number of function calls and it is assumed that
the set is known to the algorithm). We remark that using the main
idea from our algorithm, Protasov's method can be made more efficient,
resulting in oracle complexity that is only a logarithmic factor higher,
although still with a much higher arithmetic complexity than the results
of this paper.

In Section~\ref{sec:Reductions-Between-Oracles} we consider to consequences
of our main result. In \cite{GLS}, the authors describe five basic
problems over convex sets as oracles (OPTimization, SEParation, MEMbership,
VIOLation and VALidity) and give polynomial-time reductions between
them. With our new algorithm, several of these reductions become significantly
more efficient, as summarized in Theorem~\ref{thm:set-reductions}.
In discussing these reductions, it is natural to introduce oracles
for convex functions. The relationships between set oracles and function
oracles are described in Lemma~\ref{lem:from_f_to_Kf} and those
between function oracles in Lemma~\ref{lem:grad_f_star_opt_G}. Figure\ref{fig:rel}
illustrates these relationships and is an updated version of Figure
4.1 from \cite{GLS}. We suspect that the resulting complexities of
reductions are all asymptotically optimal in terms of the dimension,
up to logarithmic factors.\,

\begin{figure}
\begin{minipage}[t]{0.55\columnwidth}%
\begin{tikzpicture}[node distance = 1.8cm]
\tikzstyle{block} = [rectangle, draw, text width=6.5em, text centered, rounded corners, minimum height=4em]
\tikzstyle{line} = [draw, -tonew,arrowhead=0.085cm]
\tikzstyle{dot_line} = [draw, -tonew,dashed,arrowhead=0.085cm]
   \node [block] (opt) {\scriptsize $OPT(K) = \partial {\delta_K}*$};
   \node [block, below of=opt, node distance=1.8cm] (val) {\scriptsize $VAL(K) = {\delta_K}*$};
   \node [block, right of=opt, node distance=3cm] (sepp) {\scriptsize $SEP(K) = \partial{\delta_K}$};
   \node [block, below of=sepp, node distance=1.8cm] (mem) {\scriptsize $MEM(K) = {\delta_K}$};
   \node [block, left of=opt, node distance=3cm] (viol) {\scriptsize $VIOL(K)$};
   \node [below of=viol, node distance=1.5cm] {$\tilde{O}(1) \xrightarrow{\quad\quad}$};
   \node [below of=viol, node distance=2.1cm] {$\tilde{O}(n) \xdashrightarrow{\quad\quad}$}; 
   \path [line] (opt) -- (viol);
   \path [line] (viol) -- (opt);
   \path [line,transform canvas={xshift=-0.25cm}] (sepp) -- (mem);
   \path [line,transform canvas={xshift=-0.25cm}] (opt) -- (val);
   \path [dot_line,transform canvas={xshift=0.25cm}] (mem) -- (sepp);
   \path [dot_line,transform canvas={xshift=0.25cm}] (val) -- (opt);
   \path [dot_line,transform canvas={yshift=-0.25cm}] (opt) -- (sepp);
   \path [dot_line,transform canvas={yshift=0.25cm}] (sepp) -- (opt);  
\end{tikzpicture}%
\end{minipage}\hspace{0.1\columnwidth}%
\begin{minipage}[t]{0.35\columnwidth}%
\begin{tikzpicture}[node distance = 1.8cm]
\tikzstyle{block} = [rectangle, draw, text width=6.5em, text centered, rounded corners, minimum height=4em]
\tikzstyle{line} = [draw, -tonew,arrowhead=0.085cm]
\tikzstyle{dot_line} = [draw, -tonew,dashed,arrowhead=0.085cm]
   \node [block] (opt) {\scriptsize $GRAD(f^*) = \partial f^*$};
   \node [block, below of=opt, node distance=1.8cm] (val) {\scriptsize $EVAL(f^*) = f^*$};
   \node [block, right of=opt, node distance=3cm] (sepp) {\scriptsize $GRAD(f) = \partial{f}$};
   \node [block, below of=sepp, node distance=1.8cm] (mem) {\scriptsize $EVAL(f) = {f}$};
   \path [line,transform canvas={xshift=-0.25cm}] (sepp) -- (mem);
   \path [line,transform canvas={xshift=-0.25cm}] (opt) -- (val);
   \path [dot_line,transform canvas={xshift=0.25cm}] (mem) -- (sepp);
   \path [dot_line,transform canvas={xshift=0.25cm}] (val) -- (opt);
   \path [dot_line,transform canvas={yshift=-0.25cm}] (opt) -- (sepp);
   \path [dot_line,transform canvas={yshift=0.25cm}] (sepp) -- (opt);  
\end{tikzpicture}%
\end{minipage}

\caption{The left diagram illustrates the relationships of the five oracles
defined in \cite{GLS}. The right diagram illustrates the relationships
of oracles for a convex function $f$ and its convex conjugate $f^{*}$.
\label{fig:rel}}
\end{figure}
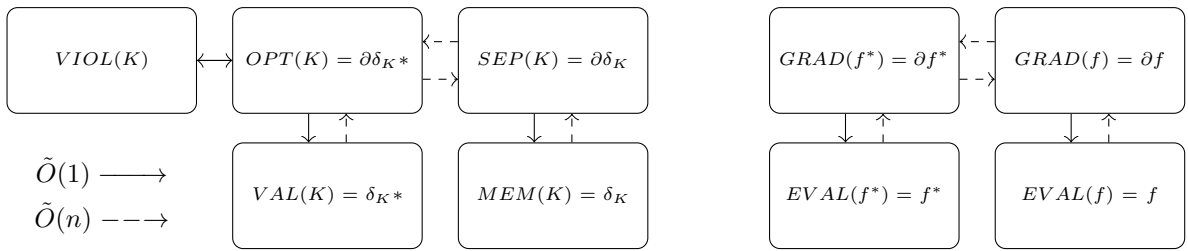

\section{Preliminaries}

Here we introduce notation and terminology. Our conventions are chosen
for simplicity and consistency with Grötschel, Lovasz and Schrijver
\cite{GLS}. We use $[n]\defeq\{1,...,n\}$. For a convex function
$f:\R^{n}\rightarrow\R$ and $x\in\R^{n}$ we use $\partial f(x)$
to denote the set of subgradients of $f$ at $x$. For $p>1$,$\delta\geq0$,
and $K\subseteq\R^{n}$ we let 
\[
B_{p}(K,\delta)\defeq\left\{ x\in\R^{n}:\exists y\in K\text{ such that }\norm{x-y}_{p}\leq\delta\right\} 
\]
denote the set of points at distance at most $\delta$ from $K$ in
$\ell_{p}$ norm. For convenience we overload notation and for $x\in\R^{n}$
let $B_{p}(x,\delta)\defeq B_{p}(\{x\},\delta)$ denote the ball of
radius $\delta$ around $x$. We also let 
\[
B_{p}(K,-\delta)\defeq\{x\in\R^{n}:B_{p}(x,\delta)\subseteq K\}
\]
denote the set of points such that the $\delta$ radius balls centered
on them are contained in $K$. In this notation, whenever $p$ is
omitted it is assumed that $p=2$. Furthermore, for any set $K\subseteq\R^{n}$
we let $\indicfunc K$ denote a function from $\R^{n}$ to $\R\cup\{+\infty\}$
such that $\indicfunc K(x)=0$ if $x\in K$ and $\indicfunc K(x)=\infty$
otherwise.

\subsection{Oracles for Convex Sets }

\label{subsec:GLS} Here we provide the five basic oracles for a convex
set, $K\subseteq\R^{n}$, defined by Grötschel, Lovasz and Schrijver
\cite{GLS}. We simplify notation slightly by using the same parameter,
$\delta>0$, to bound both the approximation error and the probability
of failure. 
\begin{defn}[Optimization Oracle (OPT)]
\label{def:OPT} Queried with a unit vector $c\in\Rn$ and a real
number $\delta>0$, with probability $1-\delta$, the oracle either

\begin{itemize}
\item finds a vector $y\in\Rn$ such that $y\in B(K,\delta)$ and $c^{T}x\leq c^{T}y+\delta$
for all $x\in B(K,-\delta)$, or
\item asserts that $B(K,-\delta)$ is empty.
\end{itemize}
We let $\text{OPT}_{\delta}(K)$ be the time complexity of this oracle. 
\end{defn}
\begin{defn}[Violation Oracle (VIOL)]
\label{def:VIOL}Queried with a unit vector $c\in\Rn$, a real number
$\gamma$ and a real number $\delta>0$, with probability $1-\delta$,
the oracle either

\begin{itemize}
\item asserts that $c^{T}x\leq\gamma+\delta$ for all $x\in B(K,-\delta)$,
or
\item finds a vector $y\in B(K,\delta)$ with $c^{T}y\geq\gamma-\delta$.
\end{itemize}
We let $\text{VIOL}_{\delta}(K)$ be the time complexity of this oracle. 
\end{defn}
\begin{defn}[Validity Oracle (VAL)]
\label{def:VAL} Queried with a unit vector $c\in\Rn$, a real number
$\gamma$, and a real number $\delta>0$, with probability $1-\delta$,
the oracle either

\begin{itemize}
\item asserts that $c^{T}x\leq\gamma+\delta$ for all $x\in B(K,-\delta)$,
or
\item asserts that $c^{T}x\geq\gamma-\delta$ for some $x\in B(K,\delta)$.
\end{itemize}
We let $\text{VAL}_{\delta}(K)$ be the time complexity of this oracle.
\end{defn}
\begin{defn}[Separation Oracle (SEP)]
\label{def:SEP} Queried with a vector $y\in\Rn$ and a real number
$\delta>0$, with probability $1-\delta$, the oracle either

\begin{itemize}
\item assert that $y\in B(K,\delta)$, or
\item find a unit vector $c\in\Rn$ such that $c^{T}x\leq c^{T}y+\delta$
for all $x\in B(K,-\delta)$.
\end{itemize}
We let $\text{SEP}_{\delta}(K)$ be the time complexity of this oracle.
\end{defn}
\begin{defn}[Membership Oracle (MEM)]
\label{def:MEM} Queried with a vector $y\in\Rn$ and a real number
$\delta>0$, with probability $1-\delta$, either

\begin{itemize}
\item assert that $y\in B(K,\delta)$, or
\item assert that $y\notin B(K,-\delta)$.
\end{itemize}
We let $\text{MEM}_{\delta}(K)$ be the time complexity of this oracle.
\end{defn}

\subsection{Oracles for Convex Functions}

Let $f$ be a function from $\Rn$ to $\R\cup\{+\infty\}$. Recall
that the dual function $f^{*}$is the convex (Fenchel) conjugate of
$f$, defined as 
\[
f^{*}(y)=\sup_{x\in\R^{n}}\langle y,x\rangle-f(x).
\]
In particular $f^{*}(0)=\inf f$. We will use the following two oracles
for functions.
\begin{defn}[Evaluation Oracle (EVAL)]
\label{def:EVAL} Queried with a vector $y$ with $\norm y_{2}\leq1$
and real number $\delta>0$ the oracle finds an extended real number
$\alpha$ such that
\begin{equation}
\min_{x\in B(y,\delta)}f(x)-\delta\leq\alpha\leq\max_{x\in B(y,\delta)}f(x)+\delta.\label{eq:eval_guarantee}
\end{equation}
We let $\text{EVAL}_{\delta}(f)$ be the time complexity of this oracle.
\end{defn}
\begin{defn}[Subgradient Oracle (GRAD)]
\label{def:GRAD} Queried with a vector $y$ with $\norm y_{2}\leq1$
and real numbers $\delta>0$, the oracle outputs an extended real
number $\alpha$ satisfying (\ref{eq:eval_guarantee}) and a vector
$c\in\Rn$ such that 
\begin{equation}
\alpha+c^{T}(x-y)<\max_{z\in B(x,\delta)}f(z)+\delta\text{ for all }x\in\Rn\label{eq:grad_guarantee}
\end{equation}
We let $\text{GRAD}_{\delta}(f)$ be the time complexity of this oracle.
\end{defn}

\section{From Membership to Separation \label{sec:From-Membership-to}}

In this section, we show that how to implement a separation oracle
for a convex set using only a nearly linear number of queries to a
membership oracle. We divide the construction into two steps. In Section~\ref{sec:alg:separation_convex_Lip},
we show how to compute an approximate subgradient of a Lipshitz convex
function via finite differences. Using this, in Section~\ref{sec:alg:separation_convex_set}
we compute an approximate separating hyperplane for a convex set using
a membership oracle for the set. The algorithms are stated in Algorithm~\ref{alg:sep_set}
and Algorithm~\ref{alg:separateFunc}.

\begin{algorithm2e}[t]\label{alg:sep_set}

\caption{$\mathtt{Separate}_{\varepsilon,\rho}(K,x)$}

\SetAlgoLined

\textbf{Require:} $B_{2}(0,r)\subset K\subset B_{2}(0,R)$.

\uIf{$\text{MEM}_{\varepsilon}(K)$ asserts that $x\in B(K,\epsilon)$}{

\textbf{Output:} $x\in B(K,\varepsilon)$.

}\ElseIf{$x\notin B_{2}(0,R)$}{

\textbf{Output:} the half space $\{y:0\geq\left\langle y-x,x\right\rangle \}$.

}

Let $\kappa=R/r$, $\alpha_{x}(d)=\max_{d+\alpha x\in K}\alpha$ and
$h_{x}(d)=-\alpha_{x}(d)\norm x_{2}$. 

The evaluation oracle of $\alpha_{x}(d)$ can be implemented via binary
search and $\text{MEM}_{\varepsilon}(K)$.

Compute $\tilde{g}=\mathtt{SeparateConvexFunc}(h_{x},0,r_{1},4\varepsilon)$
with $r_{1}=n^{1/6}\varepsilon^{1/3}R^{2/3}\kappa^{-1}$ and the evaluation
oracle of $\alpha_{x}(d)$.

\textbf{Output:} the half space
\[
\left\{ y:\frac{50}{\rho}n^{7/6}R^{2/3}\kappa\varepsilon^{1/3}\geq\left\langle \tilde{g},y-x\right\rangle \right\} 
\]

\end{algorithm2e}

The output of the algorithm for separation is a halfspace that approximately
contains $K$ and the input point $x$ is close to its bounding hyperplane.
It uses a call to a an subgradient function given below. 

\begin{algorithm2e}[t] \label{alg:separateFunc}

\caption{$\mathtt{SeparateConvexFunc}(f,x,r_{1},\varepsilon)$}

\SetAlgoLined

\textbf{Require:} $r_{1}>0$, $\norm{\partial f(z)}_{\infty}\leq L$
for any $z\in B_{\infty}(x,2r_{1})$.

Set $r_{2}=\sqrt{\frac{\varepsilon r_{1}}{\sqrt{n}L}}$.

Sample $y\in B_{\infty}(x,r_{1})$ and $z\in B_{\infty}(y,r_{2})$
independently and uniformly at random.

\For{$i=1,2,\cdots,n$ }{

Let $\alpha_{i}$ and $\beta_{i}$ denote the end points of the interval
$B_{\infty}(y,r_{2})\cap\{z+se_{i}:s\in\R\}$.

Set $\tilde{g}_{i}=\frac{f(\beta_{i})-f(\alpha_{i})}{2r_{2}}$ where
we compute $f$ with $\varepsilon$ additive error.

}

\textbf{Output} $\tilde{g}$ as the approximate subgradient of $f$
at $x$.

\end{algorithm2e}

\subsection{Separation for Lipschitz Convex Function}

\label{sec:alg:separation_convex_Lip}

Here we show how to construct a separation oracle for Lipschitz convex
function given an evaluation oracle. Our construction is motivated
by the following property of convex functions proved by Bubeck and
Eldan \cite[Lem 6]{bubeck2015multi}: for any Lipschitz convex function
$f$, there exists a small ball $B$ such that $f$ restricted on
$B$ is close to a linear function. By a small modification of their
proof, we show this property in fact holds for almost every small
ball (Lemma~\ref{lem:convex_almost_flat}). This can be viewed as
a quantitative version of the Alexandrov theorem for Lipschitz convex
functions.

Leveraging this powerful fact, our algorithm is simple: we compute
a random partial difference in each coordinate to get a subgradient
(Algorithm~\ref{alg:separateFunc}). We prove that as long as the
box we compute over sufficiently small and the additive error in our
evaluation oracle is sufficiently small, this yields an accurate separation
oracle in expectation (Lemma~\ref{lem:separate_conv_func}). We then
obtain high probability bounds using Markov's inequality.

In our analysis we use $\ast$ to denote the convolution operator,
i.e. $(f\ast g)(x)=\int_{\Rn}f(y)g(x-y)dy$.
\begin{lem}
\label{lem:convex_almost_flat} For any $0<r_{2}\leq r_{1}$ and twice
differentiable convex function $f$ defined on $B_{\infty}(x,r_{1}+r_{2})$
with $\norm{\nabla f(z)}_{\infty}\leq L$ for any $z\in B_{\infty}(x,r_{1}+r_{2})$
we have 
\[
\E_{y\in B_{\infty}(x,r_{1})}\E_{z\in B_{\infty}(y,r_{2})}\norm{\nabla f(z)-g(y)}_{1}\leq n^{3/2}\frac{r_{2}}{r_{1}}L
\]
where $g(y)$ is the average of $\nabla f$ over $B_{\infty}(y,r_{2})$.
\end{lem}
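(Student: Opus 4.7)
My plan is to reduce the $L_1$ deviation to an integrated Hessian quantity and then combine the positive semidefiniteness of $\nabla^{2}f$ with an integration-by-parts bound on its trace. First, by Jensen's inequality $\E_{y,z}\norm{\nabla f(z)-g(y)}_{1}\le \E_{y,z,z'}\norm{\nabla f(z)-\nabla f(z')}_{1}$ for an independent copy $z'$ uniform on $B_{\infty}(y,r_{2})$, and the fundamental theorem of calculus along the segment $z_{t}=z'+t(z-z')$ yields $\nabla f(z)-\nabla f(z')=\int_{0}^{1}H(z_{t})(z-z')\,dt$ with $H:=\nabla^{2}f\succeq 0$. The key pointwise bound is the PSD inequality
\[
\norm{Hv}_{1}\le\sqrt{n}\,\norm{Hv}_{2}\le\sqrt{n}\,\sqrt{\tr(H)}\,\sqrt{v^{\top}Hv},
\]
where the second step uses $\norm{Hv}_{2}^{2}=v^{\top}H^{2}v\le\norm{H}_{\mathrm{op}}\,v^{\top}Hv\le\tr(H)\,v^{\top}Hv$, valid for $H\succeq 0$.

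Applying this inside the integral and then Cauchy--Schwarz in expectation reduces the problem to estimating the two expectations $\E\,\tr(H(z_{t}))$ and $\E\,(z-z')^{\top}H(z_{t})(z-z')$ separately. For the trace, I condition on the offset $u_{t}:=z_{t}-y\in B_{\infty}(0,r_{2})$: the cube $B_{\infty}(x+u_{t},r_{1})$ lies in the Lipschitz domain $B_{\infty}(x,r_{1}+r_{2})$, so integration by parts in each coordinate gives $\int_{B_{\infty}(x+u_{t},r_{1})}\partial_{ii}f\,dw\le 2L(2r_{1})^{n-1}$. Summing over $i$ and dividing by $(2r_{1})^{n}$ yields $\E_{y\sim U(B_{\infty}(x,r_{1}))}\tr(H(y+u_{t}))\le nL/r_{1}$ uniformly in $u_{t}$, hence $\E\,\tr(H(z_{t}))\le nL/r_{1}$.

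For the quadratic-form factor, the naive bound $v^{\top}Hv\le\norm{v}_{2}^{2}\tr(H)\le 4nr_{2}^{2}\tr(H)$ would only give $n^{2}$ overall; the extra factor of $n$ is saved by exploiting the randomness of $v=z-z'$. Changing variables to $\mu=(s+s')/2$ and $v=s-s'$ (with $s=z-y$, $s'=z'-y$), the pair $(\mu,v)$ becomes uniform over the rhombus $\{(\mu,v):|\mu_{i}|+|v_{i}|/2\le r_{2},\ \forall i\}$; in particular, conditional on $\mu$, the coordinates of $v$ are independent with mean zero and variance at most $\tfrac{4r_{2}^{2}}{3}$. At the midpoint $t=1/2$, $z_{1/2}=y+\mu$ does not depend on $v$ given $\mu$, so
\[
\E\,v^{\top}H(z_{1/2})v=\E\,\tr\!\bigl(H(y+\mu)\,\mathrm{Cov}(v\mid\mu)\bigr)\le\tfrac{4r_{2}^{2}}{3}\,\E\,\tr(H)\le\tfrac{4r_{2}^{2}}{3}\cdot\frac{nL}{r_{1}}.
\]
Multiplying the factors gives $\sqrt{n}\cdot\sqrt{nL/r_{1}}\cdot\sqrt{O(r_{2}^{2})\cdot nL/r_{1}}=O(n^{3/2}\cdot r_{2}L/r_{1})$, as required.

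The main obstacle is extending this $t=1/2$ argument uniformly to all $t\in[0,1]$: for $t\ne 1/2$, $z_{t}$ depends on $v$, so the conditional-independence trick at the midpoint does not apply directly. I expect this to be handled by combining the $z\leftrightarrow z'$ symmetry (which sends $t\mapsto 1-t$) with a refined analysis of the conditional distribution of $v$ given $u_{t}=(1-t)s'+ts$, whose covariance remains of order $r_{2}^{2}I$ throughout $t\in[0,1]$; the conditional mean contribution $\mu(u_{t})^{\top}H\mu(u_{t})$ then has to be controlled by averaging in $t$ and using the symmetry.
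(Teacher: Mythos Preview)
Your trace bound $\E\,\tr(\nabla^{2}f)\le nL/r_{1}$ via integration by parts is correct and is exactly the paper's first step. The gap is precisely the one you name: you only control $\E\,v^{\top}H(z_{t})v$ at $t=1/2$, and your proposed extension does not go through. For $t\neq 1/2$, conditioning on $u_{t}=(1-t)s'+ts$ leaves $v$ with a nonzero conditional mean $m(u_{t})$ of size $O(r_{2})$; the resulting term $m^{\top}H(y+u_{t})\,m$ is nonnegative, so the $t\mapsto 1-t$ symmetry (under which $m\mapsto -m$ and $m^{\top}Hm$ is invariant) cannot produce cancellation, and since $H$ depends on $u_{t}$ you cannot decouple $m$ from $H$ to exploit coordinate independence. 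The naive bound $m^{\top}Hm\le \|m\|_{2}^{2}\,\tr(H)\le n r_{2}^{2}\,\tr(H)$ then puts you back at $n^{2}r_{2}L/r_{1}$. One can also try integrating in $t$ first, which gives the exact identity $\int_{0}^{1}v^{\top}H(z_{t})v\,dt=v^{\top}\bigl(\nabla f(z)-\nabla f(z')\bigr)$ with expectation $2\,\E[s^{\top}\nabla f(y+s)]\le 2nr_{2}L$; but plugging this into your Cauchy--Schwarz yields only $n^{3/2}L\sqrt{r_{2}/r_{1}}$, the wrong dependence on $r_{2}/r_{1}$.

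The paper bypasses the line integral entirely. Since $\omega_{i}(z):=\partial_{i}f(z)-g(y)_{i}$ has mean zero on $B_{\infty}(y,r_{2})$ by definition of $g(y)$, the $L^{1}$ Poincar\'e inequality on the cube gives
\[
\E_{z}\,|\omega_{i}(z)|\ \le\ r_{2}\,\E_{z}\,\|\nabla\omega_{i}(z)\|_{2}\ =\ r_{2}\,\E_{z}\,\|\nabla^{2}f(z)\,e_{i}\|_{2}.
\]
Summing over $i$ and using $\sum_{i}\|He_{i}\|_{2}\le\sqrt{n}\,\|H\|_{F}\le\sqrt{n}\,\tr(H)$ for $H\succeq 0$ yields
\[
\E_{z}\,\|\nabla f(z)-g(y)\|_{1}\ \le\ \sqrt{n}\,r_{2}\,\E_{z}\,\tr\bigl(\nabla^{2}f(z)\bigr),
\]
after which the trace bound you already proved finishes the argument. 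The single missing idea, then, is to replace the Jensen/line-integral decomposition by a direct Poincar\'e inequality on the inner box; this converts the deviation from the mean into an integrated Hessian without ever introducing the segment parameter $t$ that was causing the trouble.
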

\begin{proof}
Let $h=\frac{1}{(2r_{2})^{n}}f\ast1_{B_{\infty}(0,r_{2})}$. Integrating
by parts, we have that
\[
\int_{B_{\infty}(x,r_{1})}\Delta h(y)dy=\int_{\partial B_{\infty}(x,r_{1})}\left\langle \nabla h(y),n(y)\right\rangle dy
\]
where $\Delta h(y)=\sum_{i}\frac{d^{2}h}{dx_{i}^{2}}(y)$ and $n(y)$
is the normal vector on $\partial B_{\infty}(x,r_{1})$ the boundary
of the box $B_{\infty}(x,r_{1})$, i.e. standard basis vectors. Since
$f$ is $L$-Lipschitz with respect to $\norm{\cdot}_{\infty}$ so
is $h$, i.e. $\norm{\nabla h(z)}_{\infty}\leq L$. Hence, we have
that
\[
\E_{y\in B_{\infty}(x,r_{1})}\Delta h(y)\leq\frac{1}{(2r_{1})^{n}}\int_{\partial B_{\infty}(x,r_{1})}\norm{\nabla h(y)}_{\infty}\norm{n(y)}_{1}dy\leq\frac{1}{(2r_{1})^{n}}\cdot2n(2r_{1})^{n-1}\cdot L=\frac{nL}{r_{1}}.
\]
By the definition of $h$, we have that
\begin{equation}
\E_{y\in B_{\infty}(x,r_{1})}\E_{z\in B_{\infty}(y,r_{2})}\Delta f(z)=\E_{y\in B_{\infty}(x,r_{1})}\Delta h(y)\leq\frac{nL}{r_{1}}.\label{eq:Laplacian_bound}
\end{equation}
Let $\omega_{i}(z)=\left\langle \nabla f(z)-g(y),e_{i}\right\rangle $
for all $i\in[n]$. Since $\int_{B_{\infty}(y,r_{2})}\omega_{i}(z)dz=0$,
the Poincare inequality for a box (see e.g. \cite{steinerberger2015sharp})
shows that
\[
\int_{B_{\infty}(y,r_{2})}\left|\omega_{i}(z)\right|dz\leq r_{2}\int_{B_{\infty}(y,r_{2})}\norm{\nabla\omega_{i}(z)}_{2}dz.
\]
Since $f$ is convex, we have that $\norm{\nabla^{2}f(z)}_{F}\leq\tr\nabla^{2}f(z)=\Delta f(z)$
and hence 
\[
\sum_{i\in[n]}\norm{\nabla\omega_{i}(z)}_{2}=\sum_{i\in[n]}\norm{\nabla^{2}f(z)e_{i}}_{2}\leq\sqrt{n}\norm{\nabla^{2}f(z)}_{F}\leq\sqrt{n}\Delta f(z).
\]
Using this with $\norm{\nabla f(z)-g(y)}_{1}=\sum_{i}|\omega_{i}(z)|$,
we have that
\[
\int_{B_{\infty}(y,r_{2})}\norm{\nabla f(z)-g(y)}_{1}dz\leq\sqrt{n}r_{2}\int_{B_{\infty}(y,r_{2})}\Delta f(z)dz.
\]
Combining with the inequality (\ref{eq:Laplacian_bound}) yields the
result.
\end{proof}
\begin{lem}
\label{lem:separate_conv_func} Given $r_{1}>0$. Let $f$ be a convex
function on $B_{\infty}(x,2r_{1})$. Suppose that $\norm{\partial f(z)}_{\infty}\leq L$
for any $z\in B_{\infty}(x,2r_{1})$. Also, assume that we can compute
function f with $\varepsilon$ additive error with $\varepsilon\leq r_{1}\sqrt{n}L$.Let
$\tilde{g}=\mathtt{SeparateConvexFunc}(f,x,r_{1},\varepsilon)$. Then,
there is random variable $\zeta\geq0$ with $\E\zeta\leq3\sqrt{\frac{L\varepsilon}{r_{1}}}n^{5/4}$
such that 
\[
f(q)\geq f(x)+\left\langle \tilde{g},q-x\right\rangle -\zeta\norm{q-x}_{\infty}-4nr_{1}L\text{ for all }q\in\Omega.
\]
\end{lem}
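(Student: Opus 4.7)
The plan is to decompose $\tilde g - g(y)$, where $g(y)$ denotes the average of $\nabla f$ over $B_{\infty}(y,r_{2})$ from Lemma~\ref{lem:convex_almost_flat}, into three pieces -- the oracle error $\tilde g-\bar g$, a within-axis smoothing error $\bar g-\nabla f(z)$, and the between-axes error $\nabla f(z)-g(y)$ -- bound each in expectation, and then convert the resulting bound on $\norm{\tilde g-g(y)}_{1}$ into the stated inequality via a box-average smoothing of $f$. A standard mollification lets me assume $f$ is $C^{2}$: convexity and the $\ell_{\infty}$-gradient bound are preserved by convolution with a smooth kernel, and both sides of the claim are continuous under uniform convergence.

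First I set $\bar g_{i}\defeq (f(\beta_{i})-f(\alpha_{i}))/(2r_{2})$, the exact finite difference, so that the $\varepsilon$-additive oracle contributes $\norm{\tilde g-\bar g}_{1}\le n\varepsilon/r_{2}$. Writing $\bar g_{i} = (2r_{2})^{-1}\int_{y_{i}-r_{2}}^{y_{i}+r_{2}}\partial_{i} f(z_{-i},s)\,ds$ and using that $s\mapsto \partial_{i} f(z_{-i},s)$ is nondecreasing (convexity of $f$ along the $i$-th axis), I obtain the pointwise estimate
\[
\abs{\bar g_{i}-\partial_{i} f(z)}\;\le\;\partial_{i} f(z_{-i},y_{i}+r_{2})-\partial_{i} f(z_{-i},y_{i}-r_{2})\;=\;\int_{y_{i}-r_{2}}^{y_{i}+r_{2}}\partial_{ii}^{2} f(z_{-i},s)\,ds.
\]
Integrating over $z\in B_{\infty}(y,r_{2})$ and summing over $i$ turns the right-hand side into $2r_{2}$ times the box-average of $\Delta f$ over $B_{\infty}(y,r_{2})$; taking expectation over $y\in B_{\infty}(x,r_{1})$ and invoking the Laplacian bound $\E_{y}\E_{w}\Delta f(w)\le nL/r_{1}$ already established inside the proof of Lemma~\ref{lem:convex_almost_flat} gives $\E\norm{\bar g-\nabla f(z)}_{1}\le 2nLr_{2}/r_{1}$. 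Combined with the bound $\E\norm{\nabla f(z)-g(y)}_{1}\le n^{3/2}Lr_{2}/r_{1}$ from Lemma~\ref{lem:convex_almost_flat} itself,
\[
\E\,\zeta\;\le\;\frac{n\varepsilon}{r_{2}}+\frac{(2n+n^{3/2})\,Lr_{2}}{r_{1}},\qquad \zeta\defeq\norm{\tilde g-g(y)}_{1}.
\]
The algorithmic choice $r_{2}=\sqrt{\varepsilon r_{1}/(\sqrt{n}L)}$ balances the two summands at order $n^{5/4}\sqrt{L\varepsilon/r_{1}}$, and a direct calculation collapses the bound to $\E\zeta\le 3n^{5/4}\sqrt{L\varepsilon/r_{1}}$ as claimed.

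To convert this into the pointwise inequality, I use the box-average $h\defeq (2r_{2})^{-n}f*\indicfunc{B_{\infty}(0,r_{2})}$ as an intermediary: $h$ is convex and $C^{1}$ with $\nabla h(y)=g(y)$, and since $f$ is $L$-Lipschitz with respect to $\norm{\cdot}_{1}$ (which is exactly the $\ell_{\infty}$-gradient bound) we have $\abs{h(q)-f(q)}\le nLr_{2}$ pointwise. Convexity of $h$ yields $h(q)\ge h(y)+\langle g(y),q-y\rangle$; trading $h$ for $f$ at both endpoints costs $2nLr_{2}$, moving the base point from $y$ to $x$ costs $L\norm{y-x}_{1}\le nLr_{1}$ in each of the two terms $f(y)-f(x)$ and $\langle g(y),x-y\rangle$, and finally replacing $g(y)$ by $\tilde g$ in the linear term introduces $\langle g(y)-\tilde g,q-x\rangle$, bounded by $\zeta\norm{q-x}_{\infty}$. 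The hypothesis $\varepsilon\le r_{1}\sqrt{n}L$ forces $r_{2}\le r_{1}$, so the four additive slacks collapse into at most $4nLr_{1}$, giving the stated inequality.

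The main obstacle is the bound on $\E\norm{\bar g-\nabla f(z)}_{1}$: the key observation is that convexity of $f$ along each single axis is already enough to yield the clean pointwise endpoint estimate, and that the resulting integrand is exactly the Laplacian whose expectation is controlled inside the proof of Lemma~\ref{lem:convex_almost_flat}. The remaining ingredients -- the triangle inequality, the Lipschitz estimates, the smoothing reduction to $C^{2}$, and the calibration of $r_{2}$ -- are routine.
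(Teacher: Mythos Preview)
Your argument is sound in outline and shares the key ingredients with the paper's proof---Lemma~\ref{lem:convex_almost_flat}, the Laplacian bound, and Lipschitz control of the base point---but differs in two technical choices worth noting. First, you anchor the convexity step at the smoothed function $h$ and take $\zeta=\norm{\tilde g-g(y)}_{1}$; the paper instead applies convexity of $f$ directly at the random point $z$ and takes $\zeta=\norm{\tilde g-\nabla f(z)}_{1}$. The paper's choice sidesteps a domain issue in your version: the inequality $h(q)\ge h(y)+\langle g(y),q-y\rangle$ requires $B_{\infty}(q,r_{2})\subset\Omega$, which can fail when $q$ lies on the boundary of $\Omega$ or, as in the downstream application to $h_{x}$ on $K$, well outside $B_{\infty}(x,2r_{1})$. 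The fix is easy---average the pointwise inequality $f(q)\ge f(w)+\langle\nabla f(w),q-w\rangle$ over $w\in B_{\infty}(y,r_{2})$ instead of invoking convexity of $h$---but as written there is a small gap. Second, your detour through $\nabla f(z)$ in the three-term split is unnecessary and costs you the constant: the paper observes via Jensen that $\E_{z}\abs{\bar g_{i}-g(y)_{i}}\le\E_{z}\abs{\partial_{i}f(z)-g(y)_{i}}$, so that $\E\norm{\bar g-\nabla f(z)}_{1}\le 2\,\E\norm{\nabla f(z)-g(y)}_{1}$, which yields $\E\zeta\le 2n^{3/2}Lr_{2}/r_{1}+n\varepsilon/r_{2}=3n^{5/4}\sqrt{L\varepsilon/r_{1}}$ on the nose; your extra $2nLr_{2}/r_{1}$ term gives $(2n^{5/4}+2n^{3/4})\sqrt{L\varepsilon/r_{1}}$, which overshoots $3n^{5/4}$ for $n\le 3$.
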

\begin{proof}
By limiting argument, we assume that $f$ is twice differentiable.

First, we assume that we can compute $f$ exactly, namely $\varepsilon=0$.
Fix $i\in[n]$. Let $g(y)$ is the average of $\nabla f$ over $B_{\infty}(y,r_{2})$.
Then, we have that
\begin{align*}
\E_{z}\left|\tilde{g}_{i}-g(y)_{i}\right| & =\E_{z}\left|\frac{f(\beta_{i})-f(\alpha_{i})}{2r_{2}}-g(y)_{i}\right|\\
 & \leq\E_{z}\frac{1}{2r_{2}}\int\left|\frac{df}{dx_{i}}(z+se_{i})-g(y)_{i}\right|ds\\
 & =\E_{z}\left|\frac{df}{dx_{i}}(z)-g(y)_{i}\right|
\end{align*}
where we used that both $z+se_{i}$ and $z$ are uniform distribution
on $B_{\infty}(y,r_{2})$ in the last line. Hence, we have
\[
\E_{z}\norm{\tilde{g}-\nabla f(z)}_{1}\leq\E_{z}\norm{\nabla f(z)-g(y)}_{1}+\E_{z}\norm{\tilde{g}-g(y)}_{1}\leq2\E_{z}\norm{\nabla f(z)-g(y)}_{1}.
\]
Now, applying the convexity of $f$ yields that
\begin{align*}
f(q) & \geq f(z)+\left\langle \nabla f(z),q-z\right\rangle \\
 & =f(z)+\left\langle \tilde{g},q-x\right\rangle +\left\langle \nabla f(z)-\tilde{g},q-x\right\rangle +\left\langle \nabla f(z),x-z\right\rangle \\
 & \geq f(x)+\left\langle \tilde{g},q-x\right\rangle -\norm{\nabla f(z)-\tilde{g}}_{1}\norm{q-x}_{\infty}-\norm{\nabla f(z)}_{\infty}\norm{x-z}_{1}.
\end{align*}
Now, $\norm{\nabla f(z)}_{\infty}\leq L$ and $\norm{x-z}_{1}\leq n\cdot\norm{x-z}_{\infty}\leq2n(r_{1}+r_{2})$
by assumption. Furthermore, we can apply Lemma~\ref{lem:convex_almost_flat}
to bound $\norm{\nabla f(z)-\tilde{g}}_{1}$ and use that $r_{2}=\sqrt{\frac{\varepsilon r_{1}}{\sqrt{n}L}}\leq r_{1}$
to get
\[
f(q)\geq f(x)+\left\langle \tilde{g},q-x\right\rangle -\zeta\norm{q-x}_{\infty}-4nr_{1}L
\]
with $\E\zeta\leq2n^{3/2}\frac{r_{2}}{r_{1}}L$.

Since we only compute $f$ up to $\varepsilon$ additive error, it
introduces $\frac{\varepsilon}{r_{2}}$ additive error into $\tilde{g}_{i}$.
Hence, we instead have that
\[
\E\zeta\leq2n^{3/2}\frac{r_{2}}{r_{1}}L+\frac{\varepsilon n}{r_{2}}.
\]
Putting $r_{2}=\sqrt{\frac{\varepsilon r_{1}}{\sqrt{n}L}}$, we get
the bound.
\end{proof}

\subsection{Separation for Convex Set\label{sec:alg:separation_convex_set}}

Throughout this subsection, let $K\subseteq\R^{n}$ be a convex set
that contains $B_{2}(0,r)$ and is contained in $B_{2}(0,R)$. Given
some point $x\notin K$, we wish to separate $x$ from $K$ using
a membership oracle. To do this, we reduce this problem to computing
an approximate subgradient of a Lipschitz convex function, called
$h_{x}(d)$, the ``height'' of a point $d$ in the direction of
$x$. We let $\alpha_{x}(d)=\max_{d+\alpha x\in K}\alpha$ and define
$h_{x}(d)=-\alpha_{x}(d)\norm x_{2}$. Note that $d+\alpha_{x}(d)x$
is the last point on the line passing through $d$ and $d+x$ that
is in $K$ and that $-h_{x}(d)$ is the $\ell_{2}$ distance from
this point to $d$.
\begin{lem}
\label{lem:h_convex} $h_{x}(d)$ is convex on $K$.
\end{lem}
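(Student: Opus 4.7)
The plan is to reduce convexity of $h_x$ to concavity of $\alpha_x$, since $h_x(d) = -\alpha_x(d)\|x\|_2$ is just $\alpha_x$ rescaled by the negative constant $-\|x\|_2 \leq 0$ (the $d=0,\ x\neq 0$ case is nondegenerate; if $x=0$ then $h_x \equiv 0$ is trivially convex). So it suffices to prove that $\alpha_x : K \to \R$ is concave.

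For concavity, I would take two points $d_1, d_2 \in K$ and $\lambda \in [0,1]$, set $d = \lambda d_1 + (1-\lambda) d_2$, and write $\alpha_i = \alpha_x(d_i)$ so that by definition $d_i + \alpha_i x \in K$. Then the convex combination
\[
\lambda(d_1 + \alpha_1 x) + (1-\lambda)(d_2 + \alpha_2 x) = d + \bigl(\lambda \alpha_1 + (1-\lambda)\alpha_2\bigr)\,x
\]
lies in $K$ by convexity of $K$. Hence the scalar $\lambda \alpha_1 + (1-\lambda)\alpha_2$ is an admissible value in the defining maximum for $\alpha_x(d)$, giving $\alpha_x(d) \geq \lambda \alpha_x(d_1) + (1-\lambda)\alpha_x(d_2)$, i.e., $\alpha_x$ is concave and therefore $h_x$ is convex.

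There is no real obstacle here; the only minor bookkeeping is to note that $\alpha_x(d)$ is well-defined and finite for $d \in K$: existence of the maximum follows from closedness of $K$ (or, if $K$ is merely convex, from reinterpreting $\alpha_x(d)$ as the supremum, which is still achieved or at worst bounded since $K \subseteq B_2(0,R)$), and finiteness from the bound $\|x\|_2 \cdot \alpha_x(d) \leq 2R$. These observations make the above linear combination argument rigorous without any change.
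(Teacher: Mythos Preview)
Your proof is correct and follows essentially the same argument as the paper: both take $d_1,d_2\in K$, use that $d_i+\alpha_x(d_i)x\in K$, form the convex combination to see that $d+(\lambda\alpha_x(d_1)+(1-\lambda)\alpha_x(d_2))x\in K$, and conclude $\alpha_x$ is concave (hence $h_x$ is convex). Your additional remarks on well-definedness and the degenerate case $x=0$ are harmless extras not present in the paper's version.
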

\begin{proof}
Let $d_{1},d_{2}\in K$ and $\lambda\in[0,1]$ be arbitrary. Now $d_{1}+\alpha_{x}(d_{1})x\in K$
and $d_{2}+\alpha_{x}(d_{2})x\in K$ and consequently,
\[
\left[\lambda d_{1}+(1-\lambda)d_{2}\right]+\left[\lambda\cdot\alpha_{x}(d_{1})+(1-\lambda)\cdot\alpha_{x}(d_{2})\right]x\in K\,.
\]
Therefore, if we let $d\defeq\lambda d_{1}+(1-\lambda)d_{2}$ we see
that $\alpha_{x}(d)\geq\lambda\cdot\alpha_{x}(d_{1})+(1-\lambda)\cdot\alpha_{x}(d_{2})$
and $h_{x}(\lambda d_{1}+(1-\lambda d_{2})\leq\lambda h_{x}(d_{1})+\lambda h_{x}(d_{2}).$
\end{proof}
\begin{lem}
\label{lem:h_Lip} $h_{x}$ is $\frac{R+\delta}{r-\delta}$ Lipschitz
over points in $B_{2}(0,\delta)$ for $\delta<r$. 
\end{lem}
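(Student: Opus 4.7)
The plan is to give a direct geometric argument. Fix $d_1,d_2\in B_2(0,\delta)$, and assume without loss of generality that $\alpha_x(d_1)\ge\alpha_x(d_2)\ge 0$ (the nonnegativity uses $d_i\in B(0,\delta)\subset B(0,r)\subset K$). It suffices to show
\[
\bigl(\alpha_x(d_1)-\alpha_x(d_2)\bigr)\norm{x}_2\le\frac{R+\delta}{r-\delta}\norm{d_1-d_2}_2,
\]
since then the definition of $h_x$ gives $|h_x(d_1)-h_x(d_2)|\le\frac{R+\delta}{r-\delta}\norm{d_1-d_2}_2$.

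Let $p_1\defeq d_1+\alpha_x(d_1)x\in K$; by the inclusion $K\subseteq B_2(0,R)$ we have $\alpha_x(d_1)\norm{x}_2=\norm{p_1-d_1}_2\le R+\delta$. The idea is to construct a point in $K$ of the form $d_2+\lambda\alpha_x(d_1)x$ for $\lambda$ as close to $1$ as possible, which immediately forces $\alpha_x(d_2)\ge\lambda\alpha_x(d_1)$. The natural candidate is the convex combination
\[
q\defeq\lambda p_1+(1-\lambda)u,\qquad u\defeq\frac{d_2-\lambda d_1}{1-\lambda},
\]
for some $\lambda\in[0,1)$ to be chosen. A direct expansion gives $q-d_2=\lambda\alpha_x(d_1)x$, so it only remains to ensure $q\in K$, for which it is enough that $u\in B_2(0,r)\subset K$ (and then use convexity of $K$).

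To bound $\norm{u}_2$ we use the triangle inequality: $\norm{d_2-\lambda d_1}_2\le\norm{d_1-d_2}_2+(1-\lambda)\norm{d_1}_2\le\norm{d_1-d_2}_2+(1-\lambda)\delta$, so $\norm{u}_2\le r$ is guaranteed as soon as $(1-\lambda)(r-\delta)\ge\norm{d_1-d_2}_2$. Taking $1-\lambda=\norm{d_1-d_2}_2/(r-\delta)$ achieves this (and is in $[0,1]$ because $\norm{d_1-d_2}_2\le 2\delta\le r-\delta$ can fail; if it fails the lemma is trivial since the Lipschitz bound is vacuous after taking a convex combination with the endpoints, or one simply rescales to a shorter segment). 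Substituting back,
\[
\bigl(\alpha_x(d_1)-\alpha_x(d_2)\bigr)\norm{x}_2\le(1-\lambda)\alpha_x(d_1)\norm{x}_2\le(1-\lambda)(R+\delta)=\frac{R+\delta}{r-\delta}\norm{d_1-d_2}_2,
\]
which is the desired bound.

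The only subtle step is choosing the right convex combination: using $u=(d_2-\lambda d_1)/(1-\lambda)$ is exactly what makes the direction of $q-d_2$ collinear with $x$, turning the problem into a one-parameter optimization in $\lambda$. Once this is set up, the Lipschitz constant is forced by the two geometric radii: the factor $R+\delta$ comes from the diameter bound on $p_1-d_1$, and the factor $1/(r-\delta)$ comes from how much "room" there is in $B_2(0,r)$ around $d_2$ after shifting by $d_1$. I do not foresee any real obstacle; the main thing to watch is keeping track of signs and making sure $\lambda\in[0,1]$, which is where the hypothesis $\delta<r$ is used.
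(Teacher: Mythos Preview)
Your argument is correct and essentially identical to the paper's: you build the same auxiliary point (your $u$ is exactly the paper's $d_3$, with the roles of $\lambda$ and $1-\lambda$ swapped) and extract the same bound $\alpha_x(d_2)\ge\lambda\,\alpha_x(d_1)$. The only cosmetic difference is in the edge case $\norm{d_1-d_2}_2>r-\delta$: the paper dispatches it directly from the crude range bound $0\ge h_x\ge-(R+\delta)$, whereas you appeal to subdividing the segment; both work, but the range bound is cleaner than your parenthetical remark and you might state it explicitly.
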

\begin{proof}
Let $d_{1},d_{2}$ be arbitrary points in $B(0,\delta)$. We wish
to upper bound $\left|h_{x}(d_{1})-h_{x}(d_{2})\right|$ in terms
of $\norm{d_{1}-d_{2}}_{2}$. We assume without loss of generality
that $\alpha_{x}(d_{1})\geq\alpha_{x}(d_{2})$ and therefore
\[
\left|h_{x}(d_{1})-h_{x}(d_{2})\right|=\left|\alpha_{x}(d_{1})\norm x_{2}-\alpha_{x}(d_{2})\norm x_{2}\right|=\left(\alpha_{x}(d_{1})-\alpha_{x}(d_{2})\right)\norm x_{2}\,.
\]
Consequently, it suffices to lower bound $\alpha_{x}(d_{2})$. We
split the analysis into two cases.

Case 1: $\norm{d_{2}-d_{1}}_{2}\leq r-\delta$. We consider the point
$d_{3}=d_{1}+\frac{d_{2}-d_{1}}{\lambda}$ with $\lambda=\norm{d_{2}-d_{1}}_{2}/(r-\delta)$.
Note that 
\[
\norm{d_{3}}_{2}\leq\norm{d_{1}}_{2}+\frac{1}{\lambda}\norm{d_{2}-d_{1}}_{2}\leq\delta+\frac{1}{\lambda}\norm{d_{2}-d_{1}}_{2}\leq r.
\]
Hence, $d_{3}\in K$. Since $\lambda\in[0,1]$ and $K$ is convex,
we have that $\lambda\cdot d_{3}+(1-\lambda)\cdot\left[d_{1}+\alpha_{x}(d_{1})x\right]\in K$.
Now, we note that
\[
\lambda\cdot d_{3}+(1-\lambda)\cdot\left[d_{1}+\alpha_{x}(d_{1})x\right]=d_{2}+(1-\lambda)\cdot\alpha_{x}(d_{1})x
\]
and this shows that 
\[
\alpha_{x}(d_{2})\geq(1-\lambda)\cdot\alpha_{x}(d_{1})=\left(1-\frac{\norm{d_{2}-d_{1}}_{2}}{r-\delta}\right)\cdot\alpha_{x}(d_{1}).
\]
Since $d_{1}+\alpha_{x}(d_{1})x\in K\subset B_{2}(0,R)$, we have
that $\alpha_{x}(d_{1})\cdot\norm x_{2}\leq R+\delta$ and hence
\[
\left|h_{x}(d_{1})-h_{x}(d_{2})\right|=(\alpha_{x}(d_{1})-\alpha_{x}(d_{2}))\cdot\norm x_{2}\leq\alpha_{x}(d_{1})\cdot\norm x_{2}\frac{\norm{d_{2}-d_{1}}_{2}}{r-\delta}\leq\frac{R+\delta}{r-\delta}\norm{d_{2}-d_{1}}_{2}.
\]

Case 2: $\norm{d_{2}-d_{1}}_{2}\geq r-\delta$. Since $0\geq h_{x}(d_{1}),h_{x}(d_{2})\geq-R-\delta$,
we have that 
\[
\left|h_{x}(d_{1})-h_{x}(d_{2})\right|\leq R+\delta\leq\frac{R+\delta}{r-\delta}\norm{d_{2}-d_{1}}_{2}\,.
\]
In either case we have that 
\[
\left|h_{x}(d_{1})-h_{x}(d_{2})\right|\leq\frac{R+\delta}{r-\delta}\norm{d_{2}-d_{1}}_{2}
\]
yielding the desired result.
\end{proof}
\begin{lem}
\label{lem:separate_set} Let $K$ be a convex set satisfying $B_{2}(0,r)\subset K\subset B_{2}(0,R)$.
Given any $0<\rho<1$ and $0\leq\varepsilon\leq r$. With probability
$1-\rho$, $\mathtt{Separate}_{\varepsilon,\rho}(K,x)$ outputs a
half space that contains $K$. 
\end{lem}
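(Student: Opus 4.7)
The plan is to check each of the three cases of $\mathtt{Separate}_{\varepsilon,\rho}(K,x)$ separately. Case~1 outputs a witness rather than a halfspace, so the lemma's conclusion is vacuous there. Case~2, where $x\notin B_{2}(0,R)\supseteq K$, is purely geometric: for every $y\in K$, Cauchy--Schwarz gives $\langle y,x\rangle\le\norm y_{2}\norm x_{2}\le R\norm x_{2}\le\norm x_{2}^{2}$, so $\langle y-x,x\rangle\le 0$ and $y$ lies in the output halfspace.

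In Case~3, three structural facts about $h_{x}$ are the heart of the argument. First, Lemma~\ref{lem:h_convex} (and the same convex-combination proof applied to the cylinder $K+\R x$) extends the convexity of $h_{x}$ to every point I will evaluate it at. Second, Lemma~\ref{lem:h_Lip} gives a Lipschitz bound $L=O(\kappa)$ on a neighborhood of the origin containing $B_{\infty}(0,2r_{1})$. Third, $h_{x}$ is \emph{exactly affine} along the ray through the origin in direction $x$: a direct calculation gives $h_{x}(\alpha x)=h_{x}(0)+\alpha\norm x_{2}$ for every $\alpha\in\R$. These three facts let me invoke Lemma~\ref{lem:separate_conv_func} on $h_{x}$ at the origin with box radius $r_{1}$ and evaluation precision $4\varepsilon$ (which the binary search built on $\text{MEM}_{\varepsilon}$ can deliver), obtaining $\tilde g$ satisfying $h_{x}(q)\ge h_{x}(0)+\langle\tilde g,q\rangle-\zeta\norm q_{\infty}-4nr_{1}L$ for all $q\in K+\R x$, with $\E\zeta\le 3n^{5/4}\sqrt{4L\varepsilon/r_{1}}$.

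Next, I would probe $\tilde g$ at two carefully chosen points. Taking $q=y\in K$ and using $h_{x}(y)\le 0$ together with $h_{x}(0)=-\alpha_{x}(0)\norm x_{2}$ yields $\langle\tilde g,y\rangle\le\alpha_{x}(0)\norm x_{2}+R\zeta+4nr_{1}L$. Taking $q=-Mx$ and invoking the affine identity $h_{x}(-Mx)=h_{x}(0)-M\norm x_{2}$ causes the $M$-dependent terms to cancel on both sides; dividing by $M$ and sending $M\to\infty$ gives the matching lower bound $\langle\tilde g,x\rangle\ge\norm x_{2}-\zeta\norm x_{\infty}$. Subtracting these,
\[
\langle\tilde g,y-x\rangle\le(\alpha_{x}(0)-1)\norm x_{2}+2R\zeta+4nr_{1}L.
\]

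The delicate term is $(\alpha_{x}(0)-1)\norm x_{2}$, which is only nonnegative when $x\in K$---a scenario that can still arise in Case~3, since $\text{MEM}_{\varepsilon}(K)$ may assert only $x\notin B(K,-\varepsilon)$, which does not rule out $x\in K$. This is where I expect the main work: a cone argument handles it. Since $B_{2}(0,r)\subseteq K$ and $\alpha_{x}(0)x\in K$, convexity of $K$ gives $B_{2}\!\left(x,(\alpha_{x}(0)-1)r/\alpha_{x}(0)\right)\subseteq K$; so $x\notin B(K,-\varepsilon)$ forces $\alpha_{x}(0)-1<\varepsilon/(r-\varepsilon)$, hence $(\alpha_{x}(0)-1)\norm x_{2}=O(\kappa\varepsilon)$. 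Plugging in $r_{1}=n^{1/6}\varepsilon^{1/3}R^{2/3}\kappa^{-1}$ and $L=O(\kappa)$, a routine calculation shows each of the three terms on the right is $O(n^{7/6}R^{2/3}\kappa\varepsilon^{1/3})$. Finally, Markov's inequality applied to the nonnegative variable $\zeta$ gives $\zeta\le\E\zeta/\rho$ with probability at least $1-\rho$ (absorbing the $O(\log)$ MEM failure probabilities by rescaling $\rho$), and on this event the whole right-hand side is bounded by $\frac{50}{\rho}n^{7/6}R^{2/3}\kappa\varepsilon^{1/3}$ uniformly in $y\in K$, which is the claim.
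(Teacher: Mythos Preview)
Your proof is correct and follows essentially the same route as the paper: apply Lemma~\ref{lem:separate_conv_func} to $h_{x}$ at the origin, plug in $q=y\in K$ (using $h_{x}(y)\le 0$) for the upper bound and a point in the $-x$ direction for the lower bound on $\langle\tilde g,x\rangle$, then bound $(\alpha_{x}(0)-1)\norm x_{2}$ via the containment $(1-\varepsilon/r)K\subset B(K,-\varepsilon)$ (your cone argument is equivalent). The one genuine refinement is your limit $q=-Mx$, $M\to\infty$, exploiting the exact affine identity $h_{x}(\alpha x)=h_{x}(0)+\alpha\norm x_{2}$ on the extended domain $K+\R x$; this cleanly yields $\langle\tilde g,x\rangle\ge\norm x_{2}-\zeta\norm x_{\infty}$ with no additive error, whereas the paper uses the single test point $-x/\kappa\in B_{2}(0,r)$ and incurs an extra $12nr_{1}\kappa^{2}$ term that it then absorbs.
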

\begin{proof}
When $x\notin B_{2}(0,R)$, the algorithm outputs a valid separation
for $B_{2}(0,R)$. For the rest of the proof, we assume $x\notin B(K,-\varepsilon)$
(due to the membership oracle) and $x\in B_{2}(0,R)$.

By Lemma \ref{lem:h_convex} and Lemma \ref{lem:h_Lip}, $h_{x}$
is convex with Lipschitz constant $3\kappa$ on $B_{2}(0,\frac{r}{2})$.
By our assumption on $\varepsilon$ and our choice of $r_{1}$, we
have that $B_{\infty}(0,2r_{1})\subset B_{2}(0,\frac{r}{2})$. Hence,
we can apply Lemma \ref{lem:separate_conv_func} to get that
\begin{align}
h_{x}(y) & \geq h_{x}(0)+\left\langle \tilde{g},y\right\rangle -\zeta\norm y_{\infty}-12nr_{1}\kappa\label{eq:hxy-1}
\end{align}
for any $y\in K$. Note that $-\frac{x}{\kappa}\in K$ and $h_{x}(-\frac{x}{\kappa})=h_{x}(0)-\frac{1}{\kappa}\norm x_{2}$.
Hence, we have
\[
h_{x}(0)-\frac{1}{\kappa}\norm x_{2}=h_{x}(-\frac{1}{\kappa}x)\geq h_{x}(0)+\left\langle \tilde{g},-\frac{1}{\kappa}x\right\rangle -\frac{1}{\kappa}\zeta\norm x_{\infty}-12nr_{1}\kappa.
\]
Therefore, we have
\begin{equation}
\left\langle \tilde{g},x\right\rangle \geq\norm x_{2}-\zeta\norm x_{\infty}-12nr_{1}\kappa^{2}.\label{eq:lowerbound_g}
\end{equation}
Now, we note that $x\notin B(K,-\varepsilon)$. Using that $B(0,r)\subset K$,
we have $(1-\frac{\varepsilon}{r})K\subset B(K,-\varepsilon)$. Hence,
\[
h_{x}(0)\geq-\left(1-\frac{\varepsilon}{r}\right)\norm x_{2}\geq-\norm x_{2}+\varepsilon\kappa.
\]
Therefore, we have
\[
h_{x}(0)+\left\langle \tilde{g},x\right\rangle \geq-\zeta\norm x_{\infty}-12nr_{1}\kappa^{2}-\varepsilon\kappa
\]
Combining this with (\ref{eq:hxy-1}), we have that 
\begin{align*}
h_{x}(y) & \geq\left\langle \tilde{g},y-x\right\rangle -\zeta\norm y_{\infty}-\zeta\norm x_{\infty}-12nr_{1}\kappa-12nr_{1}\kappa^{2}-\varepsilon\kappa\\
 & \geq\left\langle \tilde{g},y-x\right\rangle -2\zeta R-24nr_{1}\kappa^{2}-\varepsilon\kappa
\end{align*}
for any $y\in K$. Recall from Lemma \ref{lem:separate_conv_func}
that $\zeta$ is a positive random scalar independent of $y$ satisfying
$\E\zeta\leq3\sqrt{\frac{12\kappa\varepsilon}{r_{1}}}n^{5/4}.$ For
any $y\in K$, we have that $h_{x}(y)\leq0$ and hence $\tilde{\zeta}\geq\left\langle \tilde{g},y-x\right\rangle $
where $\tilde{\zeta}$ is a random scalar independent of $y$ satisfying
\begin{align*}
\E\tilde{\zeta} & \le6\sqrt{\frac{12\kappa\varepsilon}{r_{1}}}n^{5/4}R+24nr_{1}\kappa^{2}+\varepsilon\kappa\\
 & \leq45n^{7/6}R^{2/3}\varepsilon^{1/3}\kappa+\varepsilon\kappa\\
 & \leq50n^{7/6}R^{2/3}\varepsilon^{1/3}\kappa
\end{align*}
where we used $0\leq\varepsilon\leq r$ at the end. The result then
follows from this and Markov inequality. 
\end{proof}
\begin{thm}
\label{thm:separate_set} Let $K$ be a convex set satisfying $B_{2}(0,1/\kappa)\subset K\subset B_{2}(0,1)$.
For any $0\leq\eta<\frac{1}{2}$, we have that
\[
\text{SEP}_{\eta}(K)\leq O\left(n\log\left(\frac{n\kappa}{\eta}\right)\right)\text{MEM}_{(\eta/n\kappa)^{O(1)}}(K).
\]
\end{thm}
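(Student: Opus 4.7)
The plan is to deduce the theorem by instantiating Algorithm~\ref{alg:sep_set}, i.e.\ $\mathtt{Separate}_{\varepsilon,\rho}(K,y)$, at the query point $y$ of the separation oracle, with parameters $\varepsilon$ and $\rho$ chosen as functions of $\eta$, $n$, and $\kappa$ so that both the failure probability and the separation slack are at most $\eta$. Lemma~\ref{lem:separate_set} already gives the key correctness statement that, with probability $1-\rho$, the returned half-space contains $K$. What remains is to (i) bookkeep the number of MEM calls, (ii) convert the algorithm's half-space $\{z:\tfrac{50}{\rho}n^{7/6}\kappa\varepsilon^{1/3}\ge\langle\tilde g,z-y\rangle\}$ (using $R=1$) into the unit-normal form demanded by Definition~\ref{def:SEP}, and (iii) solve for $\varepsilon$.

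For (i): the only uses of MEM inside $\mathtt{Separate}$ are the initial membership check, which costs one call to $\text{MEM}_\varepsilon(K)$; the binary-search implementation of the evaluation oracle for $\alpha_y(d)=\max\{\alpha:d+\alpha y\in K\}$, which uses $O(\log(R/\varepsilon))=O(\log(1/\varepsilon))$ calls to $\text{MEM}_{\varepsilon'}(K)$ per evaluation for a suitable $\varepsilon'$; and $\mathtt{SeparateConvexFunc}$, which queries the evaluation oracle exactly $2n$ times at the endpoints $\alpha_i,\beta_i$. Multiplying, the total membership complexity is $O(n\log(1/\varepsilon))$ calls to $\text{MEM}_{\varepsilon'}(K)$ where $\varepsilon'$ is polynomially related to $\varepsilon$.

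For (ii) and (iii), normalize $c=\tilde g/\norm{\tilde g}_2$. The half-space condition rewrites as $c^T z\le c^T y+\tfrac{50}{\rho\norm{\tilde g}_2}n^{7/6}\kappa\varepsilon^{1/3}$ for every $z\in K$, which matches the SEP guarantee once this slack is bounded by $\eta$. To control the slack I need a constant lower bound on $\norm{\tilde g}_2$. Here I would reuse equation~(\ref{eq:lowerbound_g}) from the proof of Lemma~\ref{lem:separate_set}: when the algorithm takes the nontrivial branch, $y\notin B(K,\varepsilon)$, and since $B_2(0,1/\kappa)\subset K$ we have $\norm y_2\ge 1/\kappa$; then $\langle\tilde g,y\rangle\ge\norm y_2-\zeta\norm y_\infty-12nr_1\kappa^2$ combined with Cauchy--Schwarz gives $\norm{\tilde g}_2\ge 1-o(1)$ provided $\zeta$ and $nr_1\kappa^3$ are small, which holds in our parameter regime. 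Setting $\rho=\eta/2$ (so Markov on $\zeta$ inflates the failure probability only by a constant absorbed into $\rho$) and solving $n^{7/6}\kappa\varepsilon^{1/3}/(\rho\norm{\tilde g}_2)\le\eta$ yields $\varepsilon=(\eta/(n\kappa))^{O(1)}$. Then $\log(1/\varepsilon)=O(\log(n\kappa/\eta))$, and the total MEM complexity is $O(n\log(n\kappa/\eta))\cdot\text{MEM}_{(\eta/n\kappa)^{O(1)}}(K)$ as claimed.

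The main obstacle is the lower bound $\norm{\tilde g}_2=\Omega(1)$: the bound on $\zeta$ coming from Lemma~\ref{lem:separate_conv_func} is only in expectation, so I would apply Markov's inequality with a constant failure probability that gets folded into the choice of $\rho$, and verify that under our choice $\varepsilon^{1/3}n^{7/6}\kappa\ll 1/\kappa\le\norm y_2$ so that the error terms in (\ref{eq:lowerbound_g}) are genuinely a $o(1)$ fraction of $\norm y_2$. Everything else is a short calculation of the parameter substitutions and union bound over the at most $O(n\log(1/\varepsilon))$ membership queries, which contributes only polylogarithmic slack inside $\varepsilon'=(\eta/(n\kappa))^{O(1)}$.
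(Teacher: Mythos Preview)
Your proposal is correct and follows essentially the same approach as the paper: instantiate $\mathtt{Separate}_{\varepsilon,\rho}$, count the $O(n\log(1/\varepsilon))$ membership calls from the $2n$ binary searches, normalize $\tilde g$ using the lower bound from~(\ref{eq:lowerbound_g}) (the paper records only $\norm{\tilde g}_2\ge 1/(4\kappa)$ whereas you argue the sharper $\norm{\tilde g}_2=\Omega(1)$, but both come from the same inequality and the difference is absorbed into the final polynomial), and then solve for $\varepsilon=(\eta/(n\kappa))^{O(1)}$. The only cosmetic difference is the parameter balancing: the paper sets $\rho=\sqrt{n^{7/6}\kappa^2\delta^{1/3}}$ to equalize the two error sources, while you fix $\rho=\eta/2$ up front; either choice yields the claimed bound.
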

\begin{proof}
First, we bound the running time. Note that the bottleneck is to compute
$h_{x}$ with $\varepsilon$ additive error. Since $-O(1)\leq h_{x}(y)\leq0$
for all $y\in B_{2}(0,O(1))$, one can compute $h_{x}(y)$ by binary
search with $O(\log(1/\delta))$ calls to the membership oracle.

Next, we check that $\mathtt{Separate}_{\delta,\rho}(K,x)$ is indeed
a separation oracle. Note that $\tilde{g}$ may not be an unit vector
and we need to re-normalize the $\tilde{g}$ by $1/\norm{\tilde{g}}_{2}$.
So, we need to a lower bound $\norm{\tilde{g}}_{2}$. 

From (\ref{eq:lowerbound_g}) and our choice of $r_{1}$, if $\delta\leq\frac{\rho^{3}}{10^{6}n^{6}\kappa^{6}}$,
then we have that
\begin{align*}
\left\langle \tilde{g},x\right\rangle  & \geq\norm x_{2}-\zeta\norm x_{\infty}-12nr_{1}\kappa^{2}\geq\frac{r}{4}.
\end{align*}
Hence, we have that $\norm{\tilde{g}}_{2}\geq\frac{1}{4\kappa}$.
Therefore, this algorithm is a separation oracle with error $\frac{200}{\rho}n^{7/6}\kappa^{2}\delta^{1/3}$
and failure probability $O(\rho+\log(1/\delta)\delta)$.

\[
\text{SEP}_{\Omega(\max(n^{7/6}\kappa^{2}\delta^{1/3}/\rho+\rho+\log(1/\delta)\delta)}(K)\leq O(\log(1/\delta))\text{MEM}_{\delta}(K).
\]
Setting $\rho=\sqrt{n^{7/6}\kappa^{2}\delta^{1/3}}$ and $\delta=\Theta\left(\frac{\eta^{6}}{n^{7/2}\kappa^{6}}\right)$,
we have that 
\[
\text{SEP}_{\eta}(K)\leq O(\log(\frac{n\kappa}{\eta}))\text{MEM}_{\eta^{6}/(n^{7/2}\kappa^{6})}(K).
\]
\end{proof}

\section{From Separation to Optimization \label{sec:From-Separation-to}}

Once we have a separation oracle, our running times follow by applying
a recent convex optimization algorithm by Lee, Sidford and Wong \cite{lee2015faster}.
Previous algorithms also achieved $\tilde{O}(n)$ oracle complexity,
but needed a higher polynomial number of arithmetic operations. We
remark that the theorem stated in \cite{lee2015faster} is slightly
more general then the one we give below, but since we only need to
minimize linear functions over convex sets, we state a simplified
version here.
\begin{thm}[Theorem 42 of \cite{lee2015faster} Rephrased]
\label{thm:conv_opt} Let $K$ be a convex set satisfying $B_{2}(0,r)\subset K\subset B_{2}(0,1)$
and let $\kappa=1/r$. For any $0<\varepsilon<1$, with probability
$1-\varepsilon$, we can compute $x\in B(K,\varepsilon)$ such that
\[
c^{T}x\leq\min_{x\in K}c^{T}x+\varepsilon\norm c_{2}
\]
with an expected running time of 
\[
O\left(n\text{SEP}_{\delta}(K)\log\left(\frac{n\kappa}{\varepsilon}\right)+n^{3}\log^{O(1)}\left(\frac{n\kappa}{\varepsilon}\right)\right),
\]
where $\delta=(\frac{\varepsilon}{n\kappa})^{\Theta(1)}$. In other
words, we have that
\[
\text{OPT}_{\varepsilon}(K)=O\left(n\text{SEP}_{(\frac{\varepsilon}{n\kappa})^{\Theta(1)}}(K)\log\left(\frac{n\kappa}{\varepsilon}\right)+n^{3}\log^{O(1)}\left(\frac{n\kappa}{\varepsilon}\right)\right).
\]
\end{thm}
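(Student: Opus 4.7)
The plan is to invoke Theorem 42 of \cite{lee2015faster} essentially as a black box, after massaging the parameters and verifying compatibility with our definitions of SEP and OPT. The underlying algorithm in \cite{lee2015faster} is a cutting-plane method: given access to a separation oracle for a convex set $K$ and a linear objective $c$, it iteratively maintains a polytope $P_t$ that is known to contain the level set $K \cap \{x : c^T x \leq \gamma_t\}$ for a current upper bound $\gamma_t$.

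In each iteration it picks an approximate analytic/volumetric center $y_t$ of $P_t$ using a hybrid centering data structure, queries SEP at $y_t$, and either adds the returned separating hyperplane or, when $y_t$ is declared feasible, adds the objective cut $c^T z \leq c^T y_t$ minus a step. The potential-function argument of \cite{lee2015faster} then shows that after $O(n \log(n\kappa/\varepsilon))$ iterations the remaining region has $c$-width at most $\varepsilon$, which delivers a point $x \in B(K,\varepsilon)$ with $c^T x \leq \min_{y \in K} c^T y + \varepsilon \norm{c}_2$. Their centering data structure requires amortized $O(n^2)$ arithmetic per iteration plus one $O(n^3)$ refresh every $n$ iterations, giving the total $O(n^3 \log^{O(1)}(n\kappa/\varepsilon))$ arithmetic overhead on top of the oracle calls.

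To accommodate the approximate SEP oracle of Definition \ref{def:SEP}, I would set the per-call accuracy to $\delta = (\varepsilon/(n\kappa))^{\Theta(1)}$. The key property needed is that the cutting-plane potential argument is robust to SEP error at this scale: each returned halfspace may fail to contain $B(K,-\delta)$ only by a slack of $\delta$ in width, and this slack can be absorbed into the safety margin that the method already maintains in proportion to $r/R$, preserving all geometric invariants. A union bound over the $\tilde{O}(n \log(n\kappa/\varepsilon))$ queries then bounds the total failure probability by $\varepsilon$ as required by the statement.

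The main obstacle is thus not deriving any new inequality but rather verifying, in the notation of \cite{lee2015faster}, three matching statements: (i) the approximate-SEP specification in Definition \ref{def:SEP} is at least as strong as the oracle assumed by their Theorem 42 at accuracy $\delta$; (ii) their more general guarantee, specialized to a linear objective, yields precisely the bound $c^T x \leq \min_{y \in K} c^T y + \varepsilon \norm{c}_2$ with feasibility relaxation $B(K,\varepsilon)$; and (iii) the translation $\delta = (\varepsilon/(n\kappa))^{\Theta(1)}$ suffices to absorb both accuracy loss and union-bounded failure. No new algorithmic component is needed beyond these verifications.
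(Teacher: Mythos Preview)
Your proposal is correct and matches the paper's approach: the paper provides no proof at all for this statement, treating it purely as a citation of Theorem~42 from \cite{lee2015faster} (specialized to linear objectives), so your plan to invoke that result as a black box after checking oracle compatibility is exactly what is intended. If anything, your write-up is more detailed than the paper's, which simply states the theorem and moves on.
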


\section{Reductions Between Oracles \label{sec:Reductions-Between-Oracles}}

In this section, we provide all other reductions among oracles defined
in Section~\ref{subsec:GLS}. To simplify notation we assume the
convex set is contained in the unit ball and convex function is defined
on the unit ball. This can be done without loss of generality by scaling
and shifting. 

We remark that it is known that OPT and VIOL are equivalent up to
the cost of a binary search. 
\begin{lem}[Equivalence between OPT and VIOL]
\label{def:OPT_VIOL}Given a convex set $K$ contained in the unit
ball, we have that $\text{VIOL}_{\delta}(K)\leq\text{OPT}_{\delta}(K)$
and $\text{OPT}_{\delta}(K)\leq O(\log(1+\frac{1}{\delta}))\cdot\text{VIOL}_{\delta}(K)$
for any $\delta>0$.
\end{lem}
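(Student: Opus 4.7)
The plan is to handle the two inequalities separately; the first is a one-shot reduction and the second is a standard bisection. Since $c$ is a unit vector and $K\subset B_{2}(0,1)$, we have $c^{T}x\in[-1,1]$ for every $x\in K$, so the value of the optimum lies in an interval of length $2$, which is what makes the binary search cost only $O(\log(1+1/\delta))$.

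For $\text{VIOL}_{\delta}(K)\leq\text{OPT}_{\delta}(K)$, I would answer a VIOL query $(c,\gamma,\delta)$ by a single call to OPT with input $(c,\delta)$. If OPT returns $y\in B(K,\delta)$ with $c^{T}x\leq c^{T}y+\delta$ for every $x\in B(K,-\delta)$, there are two cases: if $c^{T}y\geq\gamma-\delta$, then $y$ is a valid violating point; otherwise $c^{T}y<\gamma-\delta$, hence $c^{T}x\leq c^{T}y+\delta<\gamma\leq\gamma+\delta$ for all $x\in B(K,-\delta)$, which is the first VIOL alternative. If OPT asserts $B(K,-\delta)=\emptyset$, the first VIOL alternative holds vacuously. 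The accuracy and failure parameters match directly.

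For $\text{OPT}_{\delta}(K)\leq O(\log(1+1/\delta))\cdot\text{VIOL}_{\delta}(K)$, I would run bisection on the target value. Maintain $\gamma_{l}=-1$, $\gamma_{u}=1$, and a candidate point $y^{\star}$ (initially undefined). At each step, query VIOL with $\gamma=(\gamma_{l}+\gamma_{u})/2$ and a tolerance parameter $\delta'=\Theta(\delta)$: if VIOL returns $y\in B(K,\delta')$ with $c^{T}y\geq\gamma-\delta'$, update $y^{\star}\leftarrow y$ and $\gamma_{l}\leftarrow\gamma-\delta'$; if VIOL asserts $c^{T}x\leq\gamma+\delta'$ on $B(K,-\delta')$, set $\gamma_{u}\leftarrow\gamma+\delta'$. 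After $T=O(\log(1/\delta))$ rounds the gap $\gamma_{u}-\gamma_{l}$ is $O(\delta)$. If $y^{\star}$ is defined, then $y^{\star}\in B(K,\delta')$ and for every $x\in B(K,-\delta')$,
\[
c^{T}x\leq\gamma_{u}\leq\gamma_{l}+O(\delta)\leq c^{T}y^{\star}+O(\delta),
\]
which, after choosing $\delta'$ so that the accumulated slack is at most $\delta$, is exactly the OPT guarantee. If $y^{\star}$ is never set, every call returned an upper bound and $\gamma_{u}$ is driven down near $-1$; combined with $c^{T}x\geq-1$ for all $x\in K$, this certifies that $B(K,-\delta)$ is empty, which is the second OPT alternative.

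The main technical friction I anticipate is the bookkeeping of the failure probability, since VIOL is asked to control both accuracy and failure with a single parameter. A naive union bound over the $T=O(\log(1/\delta))$ queries gives failure probability $T\delta$, which is slightly worse than the target $\delta$; the standard fix is to call VIOL with parameter $\delta/T$ per query and observe that this only affects the complexity through the $\log$ factor already present in the bound. A minor secondary subtlety is ensuring the emptiness certificate in the "no $y^{\star}$" branch is tight, but that reduces directly to comparing $\gamma_{u}$ to $-1+O(\delta)$ using $K\subset B_{2}(0,1)$.
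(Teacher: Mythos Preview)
The paper does not actually prove this lemma; it simply remarks that ``it is known that OPT and VIOL are equivalent up to the cost of a binary search'' and moves on. Your two reductions---a single OPT call for the first inequality and bisection on $\gamma$ for the second---are exactly the intended standard argument, and your discussion of the failure-probability bookkeeping is accurate.

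There is, however, one genuine slip in your treatment of the ``no $y^\star$'' branch. Driving $\gamma_u$ down to $-1+O(\delta)$ together with $c^Tx\ge -1$ on $K$ does \emph{not} certify that $B(K,-\delta)$ is empty: it only pins $c^Tx$ to the narrow range $[-1,-1+O(\delta)]$ on $B(K,-\delta')$. For instance, if $K=B_2(-(1-\epsilon)c,\epsilon)$ with $\epsilon=2\delta'$, then $B(K,-\delta')\neq\emptyset$ yet every VIOL query in your bisection may legally choose the ``upper bound'' alternative, so $y^\star$ is never set and your algorithm would incorrectly declare emptiness. The fix is a single extra VIOL call at some $\gamma<-1-\delta'$ (e.g.\ $\gamma=-2$): since every $x\in B(K,-\delta')\subset B_2(0,1)$ has $c^Tx\ge -1>\gamma+\delta'$, the assertion $c^Tx\le\gamma+\delta'$ is vacuously true only when $B(K,-\delta')=\emptyset$, and otherwise VIOL is forced to return a point $y\in B(K,\delta')$, guaranteeing $y^\star$ is defined. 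With this one-call patch your argument goes through.
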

Hence, we ignore $\text{VIOL}$ for the remainder of this section. 

\subsection{Relationships between Set oracles and Function Oracles}

Next, to handle all these relationships efficiently, we find it convenient
to instead look at oracles on convex functions and connect them to
set oracles. For this purpose we note the following simple relationship
between $\text{MEM}(K)$ and $\text{EVAL}(1_{K})$ and between $\text{SEP}(K)$
and $\text{GRAD}(1_{K})$.
\begin{lem}[$\text{MEM}(K)$ and $\text{SEP}(K)$ are membership and subgradient
oracle of $1_{K}$]
\label{lem:SEP_K_GRAD_1K} For any convex set $K\subseteq\R^{n}$,
we have that $\text{MEM}_{\delta}(K)=\text{EVAL}_{\delta}(1_{K})$
and $\text{SEP}_{\delta}(K)=\text{GRAD}_{\delta}(1_{K})$ for any
$\delta>0$.
\end{lem}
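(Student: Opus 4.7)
The proof is a direct unpacking of the oracle definitions. The key observation is that for $f = 1_K$, the quantities $\min_{z\in B(y,\delta)} 1_K(z)$ and $\max_{z\in B(y,\delta)} 1_K(z)$ each take only the values $0$ and $+\infty$, with
\[
\min_{z\in B(y,\delta)} 1_K(z) = 0 \iff y \in B(K,\delta),
\qquad
\max_{z\in B(y,\delta)} 1_K(z) = 0 \iff y \in B(K,-\delta).
\]
This collapses the EVAL guarantee interval $[\min - \delta,\max + \delta]$ into the same two-branch dichotomy produced by MEM: a finite $\alpha$ corresponds exactly to the assertion $y\in B(K,\delta)$, and $\alpha = +\infty$ corresponds exactly to the assertion $y\notin B(K,-\delta)$. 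A single oracle query in either direction therefore suffices---MEM $\to$ EVAL by returning $\alpha = 0$ or $+\infty$ according to the branch, and EVAL $\to$ MEM by branching on whether the returned $\alpha$ is finite---yielding the first equality.

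For the second equality I would note that, still for $f = 1_K$, the subgradient inequality $\alpha + c^T(x-y) < \max_{z\in B(x,\delta)} 1_K(z) + \delta$ is vacuous when $x\notin B(K,-\delta)$ (the right side is $+\infty$) and, with $\alpha = 0$ and $x\in B(K,-\delta)$, reduces to $c^T x < c^T y + \delta$, which is precisely the half-space inequality in the SEP definition (the strict versus non-strict distinction is absorbed by an infinitesimal shrinkage of $\delta$, i.e.\ by calling the oracle with a slightly smaller error parameter). Hence I would implement GRAD from SEP by issuing one SEP call and returning $(\alpha,c) = (0,0)$ when SEP asserts $y\in B(K,\delta)$ and $(\alpha,c) = (0,c)$ when SEP returns a separating direction $c$, and implement SEP from GRAD by using the returned $c$ as the separating direction, falling back on the MEM-style assertion supplied by the first equivalence whenever $\alpha$ is finite.

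The main point that merits care is the behaviour at the boundary case $y \in B(K,\delta)\setminus B(K,-\delta)$, where either MEM/SEP branch and either finite/infinite value of $\alpha$ is simultaneously legal; I would verify directly that the wide permissible ranges of the EVAL and GRAD guarantees make each reduction consistent with any output the counterpart oracle could produce in this regime, so no overhead beyond a single query per call is incurred and the claimed oracle-time equalities $\text{MEM}_{\delta}(K) = \text{EVAL}_{\delta}(1_K)$ and $\text{SEP}_{\delta}(K) = \text{GRAD}_{\delta}(1_K)$ follow.
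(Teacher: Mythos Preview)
The paper states this lemma without proof, treating it as an immediate consequence of the definitions; your unpacking is exactly the verification one would supply, and for the $\text{MEM}\leftrightarrow\text{EVAL}$ part it is complete.

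There is, however, a small gap in your $\text{SEP}\to\text{GRAD}$ reduction. You propose returning $(\alpha,c)=(0,c)$ whenever SEP outputs a separating direction $c$, but if in fact $y\notin B(K,\delta)$ then $\min_{z\in B(y,\delta)}1_K(z)=+\infty$, so the EVAL guarantee forces $\alpha=+\infty$ and your choice $\alpha=0$ is inadmissible. Conversely, taking $\alpha=+\infty$ makes the GRAD inequality $\alpha+c^{T}(x-y)<\max_{z\in B(x,\delta)}1_K(z)+\delta$ fail at any $x\in B(K,-\delta)$ under the usual extended-real conventions. This tension is really a definitional artifact of the paper's GRAD oracle when applied to the extended-real-valued function $1_K$: the intended reading is that when $\alpha=+\infty$ the vector $c$ carries no constraint (the subgradient set of $1_K$ outside $K$ is empty). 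With that convention your one-call reductions go through and match the asserted equalities, but your write-up should treat the case $y\notin B(K,\delta)$ explicitly rather than only the boundary case $y\in B(K,\delta)\setminus B(K,-\delta)$.
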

Next, we note that the relationship between $\text{VAL}(K)$ and $\text{EVAL}(1_{K}^{*})$
and between $\text{OPT}(K)$ and $\text{GRAD}(1_{K}^{*})$.
\begin{lem}[$\text{VAL}(K)$ and $\text{OPT}(K)$ are membership and subgradient
oracle of $1_{K}^{*}$]
\label{lem:OPT_K_GRAD_1K} Given a convex set $K$. Suppose that
$B(\vec{0},r)\subset K\subset B(\vec{0},1)$ and let $\kappa=1/r$.
For any $\delta>0$, we have that 
\begin{itemize}
\item $\text{VAL}_{\delta}(K)\leq\text{EVAL}_{\delta}(1_{K}^{*})$ and $\text{EVAL}_{\delta}(1_{K}^{*})\leq O(\log(\kappa/\delta))\cdot\text{VAL}_{\Omega(\delta/(\kappa\log(1/\delta))}(K)$.
\item $\text{OPT}_{\delta}(K)\leq\text{GRAD}_{\delta/4}(1_{K}^{*})$ and
$\text{GRAD}_{\delta}(1_{K}^{*})\leq\text{OPT}_{\delta/(3+\kappa)}(K)$.
\end{itemize}
where the oracle for $1_{K}^{*}$ is only defined on the unit ball.
\end{lem}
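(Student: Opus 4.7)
The central observation is that $1_K^*$ is the \emph{support function} of $K$: $1_K^*(c) = \sup_{x \in K} c^T x$. Hence evaluating $1_K^*$ at $c$ corresponds to computing $\max_{x \in K} c^T x$ (which is what VAL decides), while subgradients of $1_K^*$ at $c$ are precisely the maximizers in $K$ of the linear functional $x \mapsto c^T x$ (which is what OPT produces). Every reduction implements the target oracle directly from the source oracle via this correspondence. Throughout, I use two facts: $1_K^*$ is $1$-Lipschitz (since $K \subseteq B(0,1)$) and $(1 - \delta'\kappa) K \subseteq B(K, -\delta')$ (since $B(0,r) \subseteq K$), which together let us compare the true support-function value to its inner and outer approximations appearing in VAL and OPT.

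For $\text{VAL}_\delta(K) \leq \text{EVAL}_\delta(1_K^*)$ I make a single EVAL call at $c$ and compare the returned $\alpha$ to $\gamma$; the ball-in-EVAL slack is absorbed by $1$-Lipschitzness into $\delta$ up to constant factors. The reverse is a binary search for $1_K^*(c) \in [0,1]$ via $\text{VAL}_{\delta'}$ queries. Each query at threshold $\gamma$ decides whether $1_K^*(c)$ lies above or below $\gamma$ up to error $O(\delta'\kappa)$, since VAL's universal branch bounds $1_{B(K,-\delta')}^*(c)$ and its existential branch bounds $1_{B(K,\delta')}^*(c)$, and these differ from $1_K^*(c)$ by $\delta'\kappa$ and $\delta'$ respectively. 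Achieving overall precision $\delta$ requires $\delta'\kappa = \Theta(\delta)$ and $N = O(\log(\kappa/\delta))$ steps, and a union bound over failure probabilities then forces $\delta' = \Omega(\delta/(\kappa \log(1/\delta)))$.

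For $\text{OPT}_\delta(K) \leq \text{GRAD}_{\delta/4}(1_K^*)$, I call GRAD at $c$ with accuracy $\delta/4$ and output the returned subgradient $g$. The near-optimality $c^T x \leq c^T g + \delta$ for $x \in B(K,-\delta) \subseteq K$ reduces to $c^T g \geq 1_K^*(c) - \delta$, obtained by plugging $x = 0$ into the approximate subgradient inequality (using $1_K^*(z) \leq \|z\|$ near the origin) and combining with the EVAL-part lower bound $\alpha \geq 1_K^*(c) - \delta/2$. The harder step is $g \in B(K, \delta)$: if not, Hahn--Banach supplies a unit $c'$ with $(c')^T g > 1_K^*(c') + \delta$. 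Plugging $x = c'$ into the subgradient inequality upper bounds $(c')^T g - 1_K^*(c')$ in terms of $g^T c - \alpha$; to close the argument I also need the matching upper bound $g^T c \leq 1_K^*(c)$, derived by plugging $x = \lambda c$ with $\lambda \to \infty$ into the subgradient inequality and using positive homogeneity $1_K^*(\lambda c) = \lambda \cdot 1_K^*(c)$. After dividing by $\lambda - 1$ and taking the limit, the $\alpha$ and $\delta/4$ terms die and only $g^T c \leq 1_K^*(c)$ survives, contradicting the assumed separation.

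For $\text{GRAD}_\delta(1_K^*) \leq \text{OPT}_{\delta/(3+\kappa)}(K)$, I call OPT at $c$ with accuracy $\delta' = \delta/(3+\kappa)$ and output $\alpha = c^T y$, $g = y$. The EVAL-part precision follows from $c^T y \leq 1_K^*(c) + \delta'$ (since $y \in B(K, \delta')$) and $c^T y \geq (1 - \delta'\kappa) 1_K^*(c) - \delta'$ (since $(1 - \delta'\kappa) K \subseteq B(K, -\delta')$), giving $|\alpha - 1_K^*(c)| \leq \delta'(\kappa + 1) \leq \delta$. The subgradient inequality collapses to $\alpha + g^T(x - c) = y^T x$, and decomposing $y = y' + v$ with $y' \in K$ and $\|v\| \leq \delta'$ yields $y^T x \leq 1_K^*(x) + \delta'\|x\|$; since $1_K^*$ is queried only on the unit ball so $\|x\| \leq 1$, we have $\delta'\|x\| \leq \delta' < \delta$, completing the verification. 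The main obstacle of the lemma is the second claim in the OPT-from-GRAD reduction: verifying via Hahn--Banach separation (combined with the $\lambda \to \infty$ plug-in) that an approximate subgradient of the support function must lie close to $K$.
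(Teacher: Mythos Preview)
Your approach matches the paper's almost line for line: the same support-function identification, the same binary search for EVAL-from-VAL, the same $d=0$ plug-in for OPT-from-GRAD, and the same $\alpha=c^{T}y$, $g=y$ output for GRAD-from-OPT. You go further than the paper in one place: for the third reduction you actually verify that the returned approximate subgradient $g$ lies in $B(K,\delta)$, which the paper's proof simply omits (it checks only near-optimality of $c^{T}g$). Your Hahn--Banach plus $\lambda\to\infty$ argument for this step is correct \emph{provided} the GRAD guarantee is read as holding for all $x\in\R^{n}$, which is indeed what Definition~\ref{def:GRAD} literally says.

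There is, however, an internal inconsistency. In the fourth reduction you invoke the opposite reading---restricting the test point to $\|x\|\le 1$ via the lemma's caveat that ``the oracle for $1_{K}^{*}$ is only defined on the unit ball''---to conclude $\delta'\|x\|\le\delta'<\delta$. Under that unit-ball reading, your $\lambda\to\infty$ step in the third reduction is unavailable; under the all-of-$\R^{n}$ reading, your fourth-reduction bound $y^{T}x\le 1_{K}^{*}(x)+\delta'\|x\|$ does not control large $\|x\|$ with the stated constant $\delta'=\delta/(3+\kappa)$. The paper's own proof has exactly the same looseness (its fourth-reduction chain $d^{T}y\le\max_{x\in K}d^{T}x+\delta$ implicitly uses $\|d\|\le 1$), so this is a wrinkle in how the lemma is stated rather than a flaw unique to your argument---but you should pick one interpretation and apply it uniformly.
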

\begin{proof}
For the first inequality, to implement the validity oracle, we need
to compute $\beta$ such that
\begin{equation}
\max_{x\in B(K,-\delta)}c^{T}x-\delta\leq\beta\leq\min_{x\in B(K,\delta)}c^{T}x+\delta\label{eq:val_req}
\end{equation}
for any unit vector $c$ and $\delta>0$. We note that
\[
\min_{x\in B(c,\frac{\delta}{R})}1_{K}^{*}(x)\geq1_{K}^{*}(c)-\delta=\max_{x\in K}c^{T}x-\delta\geq\max_{x\in B(K,-\delta)}c^{T}x.
\]
Therefore, (\ref{eq:eval_guarantee}) shows that the output $\alpha$
by $\text{EVAL}_{\delta}(1_{K}^{*})$ with input $-c$ satisfies $\max_{x\in B(K,-\delta)}c^{T}x\leq\alpha+\delta$.
Similarly, we have that $\min_{x\in B(K,\delta)}c^{T}x\geq\alpha-\delta$.
Thus, the output of $\text{EVAL}_{\delta}(1_{K}^{*})$ satisfies the
condition (\ref{eq:val_req}). Hence, we have that $\text{VAL}_{\delta}(K)\leq\text{EVAL}_{\delta}(1_{K}^{*})$.

For the second inequality, to implement the evaluation oracle of $1_{K}^{*}$,
we need to compute $1_{K}^{*}(c)=\max_{x\in K}c^{T}x$ for any vector
$c$ with $\norm c_{2}\leq1$. Using that $B(0,r)\subset K$, we have
$(1-\frac{\delta}{r})K\subset B(K,-\delta)$. Hence, we have that
\[
\max_{x\in B(K,-\delta)}c^{T}x\geq(1-\frac{\delta}{r})\max_{x\in K}c^{T}x\geq\max_{x\in K}c^{T}x-\kappa\delta.
\]
On the other hand, we have that
\[
\max_{x\in B(K,\delta)}c^{T}x\leq\max_{x\in K}c^{T}x+\delta.
\]
Hence, by binary search on $\gamma$, $\text{VAL}_{\delta}(K)$ allows
us to estimate $\max_{x\in K}c^{T}x$ up to $2(2+\kappa)\delta$ additive
error.

For the third inequality, to implement the optimization oracle, we
let $c$ be the vector we want to optimize. Let $x$ be the output
of $\text{GRAD}_{\eta}(1_{K}^{*})$ on input $c$. Using (\ref{eq:grad_guarantee})
and (\ref{eq:eval_guarantee}), we have that
\[
\min_{z\in B(c,\eta)}1_{K}^{*}(z)+x^{T}(d-c)<\max_{z\in B(d,\eta)}1_{K}^{*}(z)+2\eta
\]
for any vector $d$. Since $1_{K}^{*}$ is $R$-Lipschitz, we have
that 
\[
1_{K}^{*}(c)+x^{T}(d-c)<1_{K}^{*}(d)+4\eta.
\]
Putting $d=0$, we have
\[
\max_{x\in K}c^{T}x=1_{K}^{*}(c)\leq c^{T}x+4\eta.
\]
Setting $\eta=\delta/4$, we see that $x$ is a maximizer of $\max_{x\in K}c^{T}x$
up to $\delta$ additive error.

For the fourth inequality, to implement the subgradient oracle, we
let $c$ be the point we want to compute the subgradient such that
$\norm c_{2}\leq1$. Let $y$ be the output of $\text{OPT}_{\delta}(K)$
with input $c$. Since $(1-\frac{\delta}{r})K\subset B(K,-\delta)$,
we have that
\[
\max_{x\in K}c^{T}x\leq c^{T}y+\delta+\kappa\delta.
\]
Therefore, 
\[
c^{T}y+(d-c)^{T}y\leq\max_{x\in K}c^{T}x+\delta+(d-c)^{T}y\leq d^{T}y+(2+\kappa)\delta\leq\max_{x\in K}d^{T}x+(3+\kappa)\delta.
\]
Let $\alpha=c^{T}y$. Since $y\in B(K,\delta)$ and satisfies the
guarantee of optimization oracle, $\alpha$ satisfies (\ref{eq:eval_guarantee})
with additive error $\delta$. Furthermore, we note that 
\[
\alpha+y^{T}(d-c)\leq1_{K}^{*}(d)+(3+\kappa)\delta.
\]
Hence, it satisfies (\ref{eq:grad_guarantee}) with additive error
$(3+\kappa)\delta$.
\end{proof}
\begin{lem}
\label{lem:from_f_to_Kf} Given a convex function $f:B_{n}\rightarrow[0,1]$,
let $K_{f}=\{(\frac{x}{2},\frac{t}{4})\text{ such that }x\in B(0,1)\text{ and }f(x)\leq t\leq2\}$.
Then, 
\begin{itemize}
\item $\text{MEM}_{\delta}(K_{f})\leq\text{EVAL}_{\delta/10}(f)$ and $\text{EVAL}_{\delta}(f)\leq O(\log(1/\delta))\text{MEM}_{\Omega(\delta/\log(1/\delta))}(K_{f})$.
\item $\text{SEP}_{\delta}(K_{f})\leq\text{GRAD}_{\delta/10}(f)$ and $\text{GRAD}_{\delta}(f)\leq O(\log(1/\delta))\text{SEP}_{\Omega(\delta/\log(1/\delta))}(K_{f})$.
\item $\text{GRAD}_{\delta}(f^{*})\leq\text{OPT}_{\delta/6}(K_{f})$.
\end{itemize}
\end{lem}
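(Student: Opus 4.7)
The plan is to exploit the fact that $K_{f}$ is (up to the rescaling $(x,t)\mapsto(x/2,t/4)$) the epigraph of $f$ capped by the strip $\{t\leq 2\}$: the point $(u,v)$ lies in $K_{f}$ iff $\|2u\|_{2}\leq 1$, $f(2u)\leq 4v$, and $4v\leq 2$. Under this dictionary, MEM/SEP queries on $K_{f}$ translate to EVAL/GRAD queries on $f$ at $2u$, and maximizing a linear functional with a negative last coordinate over $K_{f}$ computes a value of $f^{*}$ together with an associated subgradient.

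For the first bullet, given a MEM query $(u,v)$ for $K_{f}$, I would first dispatch the cheap geometric constraints and otherwise call $\text{EVAL}_{\delta/10}(f)$ at $2u$ to obtain $\alpha$ with $|\alpha-f(2u)|\leq\delta/10$, declaring $(u,v)\in B(K_{f},\delta)$ iff $\alpha\leq 4v+O(\delta)$; the constants absorb the factor-$4$ Lipschitz stretch of the coordinate change. Conversely, $f(x)$ equals the smallest $t$ with $(x/2,t/4)\in K_{f}$, so an $O(\log 1/\delta)$-step binary search on $t$ using $\text{MEM}(K_{f})$ recovers $f(x)$ to additive error $\delta$, which proves the reverse direction of bullet one.

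For the second bullet, the forward reduction rests on the identity that if $f(2u)>4v$ and $g\in\partial f(2u)$ is returned by $\text{GRAD}_{\delta/10}(f)$, then for every $(u',v')\in K_{f}$
\[
2g^{T}u'-4v'\;\leq\;2g^{T}u'-f(2u')\;\leq\;2g^{T}u-f(2u)\;<\;2g^{T}u-4v,
\]
so the unit vector $(2g,-4)/\|(2g,-4)\|_{2}$ separates $(u,v)$ from $K_{f}$, with the $\delta/10$ slack sized to absorb the bounded normalization factor. For the reverse direction I would first estimate $f(x)$ by the binary-search-on-MEM trick above and then query $\text{SEP}(K_{f})$ at $(x/2,(f(x)-\eta)/4)$ for small $\eta$; the returned unit normal $(p,q)$ must have $q<0$, and $-2p/q$ is then an approximate subgradient of $f$ at $x$, with error controlled by $\eta$ and $\delta/|q|$.

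For the third bullet, the key identity is
\[
f^{*}(c)\;=\;\sup_{x\in B(0,1)}\bigl(c^{T}x-f(x)\bigr)\;=\;\sup_{(u,v)\in K_{f}}\bigl(2c^{T}u-4v\bigr),
\]
because for each feasible $x$ the minimum $t$ making $(x/2,t/4)\in K_{f}$ is precisely $f(x)$. Calling $\text{OPT}_{\delta/6}(K_{f})$ with unit direction $(-2c,4)/\|(2c,-4)\|_{2}$ returns a near-maximizer $(u^{*},v^{*})$; I return $2u^{*}$ as the subgradient and $2c^{T}u^{*}-4v^{*}$ as the value, and verify the definitional inequality via $f^{*}(c')\geq c'^{T}(2u^{*})-f(2u^{*})\geq(2c^{T}u^{*}-4v^{*})+(c'-c)^{T}(2u^{*})$ for every $c'$. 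The main obstacle here is the error bookkeeping: OPT only certifies optimality over $B(K_{f},-\delta/6)$, so one must check that this inner approximation still realizes the sup up to $O(\delta)$ (using that $K_{f}$ has non-empty interior near any active $(x/2,f(x)/4)$ whenever $f$ is bounded) and propagate the constant normalization factor $\|(2c,-4)\|_{2}\leq\sqrt{20}$ through the OPT guarantee; these two pieces together calibrate the constant $1/6$ appearing in $\text{OPT}_{\delta/6}$.
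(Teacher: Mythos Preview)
Your proposal is correct and follows essentially the same approach as the paper. The paper declares the first two bullets ``clear'' and only spells out the third, where it likewise calls $\text{OPT}$ on $K_{f}$ with direction $(c,-1)$, takes the returned near-maximizer $(y,t')$, sets $\alpha=c^{T}y-t'$ as the value and $y$ as the subgradient, and verifies the GRAD inequality via $\alpha+y^{T}(d-c)\le f^{*}(d)+6\delta$; your version is more careful about the rescaling factors $2,4$ and about normalizing the OPT direction, but the underlying argument is identical.
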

\begin{proof}
The first two sets of reductions are clear.

For the last one, to implement the subgradient oracle, we let $c$
be the point we want to compute the subgradient such that $\norm c_{2}\leq1$.
Let $(y,t')$ be the output of $\text{OPT}_{\delta}(K_{f})$ with
input $(c,-1)$. Since $(1-4\delta)K_{f}\subset B(K_{f},-\delta)$,
we have that
\[
\max_{(x,t)\in K_{f}}(c^{T}x-t)\leq c^{T}y-t'+5\delta.
\]
Since $(y,t')\in B(K_{f},\delta)$, for any vector $d$, we have that
\[
(c^{T}y-t')+(d-c)^{T}y\leq\max_{(x,t)\in K_{f}}(c^{T}x-t)+(d-c)^{T}y\leq d^{T}y-t'+5\delta\leq\max_{(x,t)\in K_{f}}(d^{T}x-t)+6\delta.
\]
Let $\alpha=c^{T}y-t'$. Since $y\in B(K,\delta)$ and satisfies the
guarantee of optimization oracle, $\alpha$ is a good enough approximation
of $f^{*}(c)$. Furthermore, we note that 
\[
\alpha+y^{T}(d-c)\leq\max_{(x,t)\in K_{f}}d^{T}x+5\delta=f^{*}(d)+6\delta
\]
Hence, it satisfies (\ref{eq:grad_guarantee}) with additive error
$6\delta$.
\end{proof}

\subsection{Relationships Between Convex Function Oracles}

Due to the equivalences above, we can focus on the more general problem:
the relationships between
\begin{itemize}
\item $\text{EVAL}_{\delta}(f)$, $\text{GRAD}_{\delta}(f)$, $\text{EVAL}_{\delta}(f^{*})$,
$\text{GRAD}_{\delta}(f^{*})$.
\end{itemize}
\begin{lem}
\label{lem:grad_f_star_opt_G}Given a convex function $f$ defined
on unit ball with value between $0$ and $1$. For any $0\leq\delta\leq\frac{1}{2}$,
we have that
\begin{itemize}
\item $\text{EVAL}_{\delta}(f)\leq\text{GRAD}_{\delta}(f)\leq O(n\log^{2}(\frac{n}{\delta}))\text{MEM}_{(\delta/n)^{O(1)}}(K_{f})\le O(n\log^{2}(\frac{n}{\delta}))\text{EVAL}_{(\delta/n)^{O(1)}}(f)$
\item $\text{GRAD}_{\delta}(f^{*})\leq\text{OPT}_{\delta/6}(K_{f})$ and
\begin{align*}
\text{OPT}_{\delta/6}(K_{f}) & \le O\left(n\text{SEP}_{(\delta/n)^{O(1)}}(K_{f})\log\left(\frac{n}{\delta}\right)+n^{3}\log^{O(1)}\left(\frac{n}{\delta}\right)\right)\\
 & \leq O\left(n\log\left(\frac{n}{\delta}\right)\cdot\text{GRAD}_{(\delta/n)^{O(1)}}(f)+n^{3}\log^{O(1)}\left(\frac{n}{\delta}\right)\right).
\end{align*}
\end{itemize}
\end{lem}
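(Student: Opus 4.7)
The proof is a chain of reductions that combines Lemma~\ref{lem:from_f_to_Kf} (which relates function oracles on $f$, $f^*$ to set oracles on $K_f$), Theorem~\ref{thm:separate_set} (MEM $\to$ SEP), and Theorem~\ref{thm:conv_opt} (SEP $\to$ OPT). No new combinatorial idea is required; the work is in threading the accuracy parameters through so that every internal precision stays polynomially bounded in $\delta/n$.

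Before starting I would check that $K_f$ is well-conditioned so that the $\kappa$ appearing inside Theorems~\ref{thm:separate_set} and \ref{thm:conv_opt} is $O(1)$. Since $f$ takes values in $[0,1]$ on $B(0,1)$ and $K_f$ scales $(x,t)$ to $(x/2,t/4)$, one sees that $K_f \subseteq B(0,1)$ and, after the shift by the fixed interior point $p_0=(0,3/8)$, the shifted body contains a Euclidean ball of constant radius (taking $p_0$ and perturbing both coordinates by at most $1/8$ keeps $(2u,4s)$ inside the epigraph strip $f(2u)\le 4s\le 2$). Thus applying the two theorems to $K_f$ incurs only constants in place of $\kappa$.

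For the first displayed chain: $\text{EVAL}_\delta(f)\le \text{GRAD}_\delta(f)$ is immediate since GRAD returns a scalar $\alpha$ satisfying the EVAL condition (\ref{eq:eval_guarantee}). Next, Lemma~\ref{lem:from_f_to_Kf} gives $\text{GRAD}_\delta(f)\le O(\log(1/\delta))\,\text{SEP}_{\Omega(\delta/\log(1/\delta))}(K_f)$. Applying Theorem~\ref{thm:separate_set} to $K_f$ (with $\kappa=O(1)$) yields $\text{SEP}_\eta(K_f)\le O(n\log(n/\eta))\,\text{MEM}_{(\eta/n)^{O(1)}}(K_f)$. Composing these with $\eta=\Omega(\delta/\log(1/\delta))$ gives the desired $O(n\log^2(n/\delta))$ factor and a membership accuracy of the form $(\delta/n)^{O(1)}$. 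Finally, Lemma~\ref{lem:from_f_to_Kf} also provides $\text{MEM}_\eta(K_f)\le \text{EVAL}_{\eta/10}(f)$, which tacks on the last inequality without changing the leading factors.

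For the second chain: $\text{GRAD}_\delta(f^*)\le \text{OPT}_{\delta/6}(K_f)$ is the last bullet of Lemma~\ref{lem:from_f_to_Kf}. Theorem~\ref{thm:conv_opt} applied to $K_f$ (again with $\kappa=O(1)$) gives
\[
\text{OPT}_{\delta/6}(K_f)\le O\!\left(n\,\text{SEP}_{(\delta/n)^{\Theta(1)}}(K_f)\log(n/\delta)+n^3\log^{O(1)}(n/\delta)\right),
\]
and then Lemma~\ref{lem:from_f_to_Kf} gives $\text{SEP}_\xi(K_f)\le \text{GRAD}_{\xi/10}(f)$; plugging $\xi=(\delta/n)^{\Theta(1)}$ preserves the form of the inner accuracy. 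The only subtlety worth pausing over, and the step I expect to require the most care, is that each reduction increases the inner accuracy by a logarithmic or polynomial factor; since these factors compose at most a constant number of times, the final inner accuracy remains of the form $(\delta/n)^{O(1)}$, and the constant exponents in the chain absorb into the $O(\cdot)$ notation. No other nontrivial steps are needed.
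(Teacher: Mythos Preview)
Your proposal is correct and follows essentially the same chain of reductions as the paper: Lemma~\ref{lem:from_f_to_Kf} to pass between function oracles and set oracles on $K_f$, Theorem~\ref{thm:separate_set} for $\text{MEM}\to\text{SEP}$, and Theorem~\ref{thm:conv_opt} for $\text{SEP}\to\text{OPT}$, composed in the stated order. Your treatment of the well-conditioning of $K_f$ is in fact more careful than the paper's, which simply asserts $B(0,0.1)\subset K_f\subset B(0,1)$ without the shift you describe; your observation that one must recenter at an interior point such as $(0,3/8)$ before invoking the two theorems is the right way to justify that $\kappa=O(1)$.
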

\begin{proof}
The bound $\text{EVAL}_{\delta,\eta}(f)\leq\text{GRAD}_{\delta,\eta}(f)$
is immediate from definition. 

To bound $\text{GRAD}(f)$ by $\text{EVAL}(f)$, we use Lemma \ref{lem:from_f_to_Kf}
and get that
\[
\text{GRAD}_{\delta}(f)\leq O(\log(\delta^{-1}))\text{SEP}_{\Omega(\delta/\log(\delta^{-1}))}(K_{f}).
\]
Next, we note that $B(0,0.1)\subset K_{f}\subset B(0,1)$. Hence,
Theorem \ref{thm:separate_set} shows that 
\[
\text{SEP}_{\delta}(K_{f})\leq O(n\log(\frac{n}{\delta}))\text{MEM}_{(\delta/n)^{O(1)}}(K_{f}).
\]
Hence, we have that 
\[
\text{GRAD}_{\delta}(f)\leq O(n\log^{2}(\frac{n}{\delta}))\text{MEM}_{(\delta/n)^{O(1)}}(K_{f}).
\]
Applying Lemma \ref{lem:from_f_to_Kf} again, we have the result.

To bound $\text{GRAD}(f^{*})$ by $\text{GRAD}(f)$, we again use
Lemma \ref{lem:from_f_to_Kf} and Theorem~\ref{thm:conv_opt} to
get
\begin{align*}
\text{GRAD}_{\delta}(f^{*}) & \leq\text{OPT}_{\delta/6}(K_{f})\leq O\left(n\text{SEP}_{(\delta/n)^{O(1)}}(K_{f})\log\left(\frac{n}{\delta}\right)+n^{3}\log^{O(1)}\left(\frac{n}{\delta}\right)\right)\\
 & \leq O\left(n\text{GRAD}_{\delta}(f)\log\left(\frac{n}{\delta}\right)+n^{3}\log^{O(1)}\left(\frac{n}{\delta}\right)\right).
\end{align*}
\end{proof}

\subsection{Relationships Between Convex Set Oracles}
\begin{thm}
\label{thm:set-reductions}For any convex set $K$ such that $B(0,1/\kappa)\subset K\subset B(0,1)$,
for any $0<\delta<\frac{1}{2}$, we have that
\begin{enumerate}
\item $\text{VIOL}_{\delta}(K)\leq\text{OPT}_{\delta}(K)$ and $\text{OPT}_{\delta}(K)\leq O(\log(1+\frac{1}{\delta}))\cdot\text{VIOL}_{\Theta(\delta/\log(1/\delta))}(K)$.
\item $\text{MEM}_{\delta}(K)\leq\text{SEP}_{\delta}(K)$ and $\text{SEP}_{\delta}(K)\leq O(n\log(\frac{n\kappa}{\delta}))\cdot\text{MEM}_{(\delta/n\kappa)^{O(1)}}(K).$
\item $\text{VAL}_{\delta}(K)\leq\text{OPT}_{\delta}(K)$ and $\text{OPT}_{\delta}(K)\leq O(n\log^{3}(\frac{n\kappa}{\delta}))\cdot\text{VAL}_{(\delta/n\kappa)^{O(1)}}(K).$
\item $\text{OPT}_{\delta}(K)=O\left(n\log\left(\frac{n\kappa}{\delta}\right)\cdot\text{SEP}_{(\delta/(n\kappa))^{O(1)}}(K)+n^{3}\log^{O(1)}\left(\frac{n\kappa}{\delta}\right)\right)$.
\item $\text{SEP}_{\delta}(K)=O\left(n\log\left(\frac{n}{\delta}\right)\cdot\text{OPT}_{(\delta/(n\kappa))^{O(1)}}(K)+n^{3}\log^{O(1)}\left(\frac{n}{\delta}\right)\right)$.
\end{enumerate}
\end{thm}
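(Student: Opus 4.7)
The plan is to dispatch the five claims by combining earlier results; items 1, 2, and 4 are essentially restatements, while items 3 and 5 require chaining reductions through the function-oracle picture developed in the previous subsection. For bullet 1 I simply quote Lemma \ref{def:OPT_VIOL}. Bullet 2 splits into the trivial direction $\text{MEM}\le\text{SEP}$ (a separating hyperplane immediately decides $y\in B(K,\delta)$ versus $y\notin B(K,-\delta)$ for the queried point) and the non-trivial one, which is exactly Theorem \ref{thm:separate_set}. Bullet 4 is a restatement of Theorem \ref{thm:conv_opt}. For the easy halves of bullets 1 and 3 I use that an OPT query with direction $c$ returns a point certifying $\max_{x\in B(K,-\delta)}c^Tx$ up to $\delta$, from which both VIOL and VAL are read off by a single comparison with $\gamma$.

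For the hard half of bullet 3, I chain three reductions $\text{OPT}(K)\to\text{GRAD}(1_K^*)\to\text{EVAL}(1_K^*)\to\text{VAL}(K)$. The first step costs only a constant factor in $\delta$ by Lemma \ref{lem:OPT_K_GRAD_1K} (replacing $\delta$ by $\delta/4$). The second step applies the bound $\text{GRAD}_\eta(f)\le O(n\log^2(n/\eta))\,\text{EVAL}_{(\eta/n)^{O(1)}}(f)$ from Lemma \ref{lem:grad_f_star_opt_G} with $f=1_K^*$; this is admissible because $K\subset B(0,1)$ and $0\in K$ force $1_K^*$ to map the unit ball into $[0,1]$. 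The third step uses $\text{EVAL}_\eta(1_K^*)\le O(\log(\kappa/\eta))\,\text{VAL}_{\Omega(\eta/(\kappa\log(1/\eta)))}(K)$ from Lemma \ref{lem:OPT_K_GRAD_1K}. Multiplying the three factors gives the stated $O(n\log^3(n\kappa/\delta))$ query overhead with $\delta$ shrinking by a polynomial in $n\kappa/\delta$.

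Bullet 5 mirrors the same idea from the dual side. Since $\text{SEP}_\delta(K)=\text{GRAD}_\delta(1_K)$ (Lemma \ref{lem:SEP_K_GRAD_1K}) and $1_K=(1_K^*)^*$, I apply the $\text{GRAD}(f^*)\le O(n\log(n/\delta))\,\text{GRAD}(f)+n^3\log^{O(1)}(n/\delta)$ bound of Lemma \ref{lem:grad_f_star_opt_G} with $f=1_K^*$, and then close using $\text{GRAD}_\eta(1_K^*)\le\text{OPT}_{\eta/(3+\kappa)}(K)$ from Lemma \ref{lem:OPT_K_GRAD_1K}. The additive $n^3\log^{O(1)}$ term is inherited from the Lee--Sidford--Wong algorithm of Theorem \ref{thm:conv_opt} via the cutting-plane step inside Lemma \ref{lem:grad_f_star_opt_G}.

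The only real nuisance is error-parameter bookkeeping: every intermediate reduction shrinks the precision by a polynomial factor in $n\kappa/\eta$, and the composition of several such polynomials must still fit the target form $(\delta/(n\kappa))^{O(1)}$. Since a composition of polynomials is a polynomial, this is immediate, but it has to be checked bullet-by-bullet. I also have to verify that the normalization $B(0,1/\kappa)\subset K\subset B(0,1)$ in the theorem statement makes $1_K^*$ genuinely satisfy the $[0,1]$-valued Lipschitz hypothesis used implicitly in Lemma \ref{lem:grad_f_star_opt_G}; this is the one place where the upper bound $K\subset B(0,1)$ is essential.
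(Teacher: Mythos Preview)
Your proposal is correct and follows essentially the same route as the paper: items 1, 2, 4 are quoted directly from Lemma~\ref{def:OPT_VIOL}, Theorem~\ref{thm:separate_set}, and Theorem~\ref{thm:conv_opt}, while items 3 and 5 are obtained by the same chains $\text{OPT}(K)\to\text{GRAD}(1_K^*)\to\text{EVAL}(1_K^*)\to\text{VAL}(K)$ and $\text{SEP}(K)=\text{GRAD}(1_K)\to\text{GRAD}(1_K^*)\to\text{OPT}(K)$ via Lemmas~\ref{lem:SEP_K_GRAD_1K}, \ref{lem:OPT_K_GRAD_1K}, and \ref{lem:grad_f_star_opt_G}. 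Your explicit check that $1_K^*$ takes values in $[0,1]$ on the unit ball (needed to invoke Lemma~\ref{lem:grad_f_star_opt_G}) is a point the paper glosses over, and your $\log^3$ bookkeeping for item~3 is actually cleaner than the paper's proof, which appears to drop a log factor in the intermediate display.
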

\begin{proof}
(1) follows from Lemma \ref{def:OPT_VIOL}. (2) follows from Theorem
\ref{thm:separate_set}. (4) follows from Theorem \ref{thm:conv_opt}.

For (3), we use Lemma \ref{lem:OPT_K_GRAD_1K}, \ref{lem:grad_f_star_opt_G}
and \ref{lem:OPT_K_GRAD_1K} to get
\begin{align*}
\text{OPT}_{\delta}(K) & \leq\text{GRAD}_{\delta/4}(1_{K}^{*})\leq O(n\log(\frac{n}{\delta}))\text{EVAL}_{(\delta/n)^{O(1)}}(1_{K}^{*})\\
 & \leq O(n\log^{2}(\frac{n\kappa}{\delta}))\text{VAL}_{(\delta/n\kappa)^{O(1)}}(K)
\end{align*}
where we used that $1_{K}^{*}$ is a function between $0$ and $1$.

For (5), we use Lemma \ref{lem:SEP_K_GRAD_1K}, \ref{lem:grad_f_star_opt_G}
and \ref{lem:OPT_K_GRAD_1K}
\begin{align*}
\text{SEP}_{\delta}(K) & =\text{GRAD}_{\delta}(1_{K})\leq O\left(n\log\left(\frac{n}{\delta}\right)\cdot\text{GRAD}_{(\delta/n)^{O(1)}}(1_{K}^{*})+n^{3}\log^{O(1)}\left(\frac{n}{\delta}\right)\right)\\
 & \leq O\left(n\log\left(\frac{n}{\delta}\right)\cdot\text{OPT}_{(\delta/(n\kappa))^{O(1)}}(K)+n^{3}\log^{O(1)}\left(\frac{n}{\delta}\right)\right)
\end{align*}
where we used that $1_{K}^{*}$ is a function between $0$ and $1$.
\end{proof}

\section*{Acknowledgments}

The authors thank Sébastien Bubeck, Ben Cousins, Sham M. Kakade and
Ravi Kannan for helpful discussions, and Yan Kit Chim for making the
illustrations. 

\bibliographystyle{plain}
\bibliography{acg}

\end{document}